\DeclareMathAlphabet{\mathpzc}{OT1}{pzc}{m}{it}
\def\alphag{\boldsymbol{\alpha}}
\def\B{{\rm B}}
\def\Xb{{\bf X}}
\def\Jb{{\bf J}}
\def\E{{\rm E}}
\def\Var{{\rm var}}
\def\R{\mathbb R}
\def\1{\mathbf{1}}
\newtheorem{lemma}{Lemma}[section]
\newtheorem{definition}{Definition}[section]
\newtheorem{proposition}{Proposition}[section]
\numberwithin{equation}{section}
\begin{document}


\title{Quasi-Systematic Sampling From a Continuous Population}
\author{Matthieu {\sc Wilhelm}, Yves {\sc Tillé} and Lionel {\sc Qualité}\\
{}\\
\normalsize Institute of Statistics\\
\normalsize Faculty of Sciences\\
\normalsize University of Neuch\^atel\\
{}\\
{}\\
}
\maketitle
\begin{abstract}
A specific family of point processes are introduced that allow to select samples for the purpose of estimating the mean or the integral of a function of a real variable. These processes, called quasi-systematic processes, depend on a tuning parameter $r>0$ that permits to control the likeliness of jointly selecting neighbor units in a same sample. When $r$ is large, units that are close tend to not be selected together and samples are well spread. When $r$ tends to infinity, the sampling design is close to systematic sampling. For all $r > 0$, the first and second-order unit inclusion densities are positive, allowing for unbiased estimators of variance. 

Algorithms to generate these sampling processes for any positive real value of $r$ are presented. When $r$ is large, the estimator of variance is unstable. It follows that $r$ must be chosen by the practitioner as a trade-off between an accurate estimation of the target parameter and an accurate estimation of the variance of the parameter estimator. The method's advantages are illustrated with a set of simulations.
\end{abstract}




\section{Introduction}
We propose to use a specific family of point processes to select samples for the purpose of estimating the mean or the integral of a function of a real variable. We draw a parallel with sampling designs which are themselves point processes on finite spaces. Systematic sampling is widely used in finite population. It  has been introduced by \cite{mad:mad:44} and \cite{mad:49}. It is easily implemented and, by spreading the sample over the population, it results in precise mean and total estimators when the variable of interest is similar for neighboring units. The main drawback of systematic sampling is that most of the unit joint inclusion probabilities are null, making it impossible to estimate the variance of the Horvitz-Thompson estimator without bias \citep[see][]{hor:tho:52}. 

The aim of this paper is to develop a method that is a compromise between a base point process such as the Poisson process or the binomial process and the systematic process for sample selections in a continuous population. A similar objective is pursued in \cite{Breidt95} in a finite population setting supported by a superpopulation model. \cite{Breidt95} considers one-per-stratum sampling designs from a population that is split into strata of $a$ successive units where $a$ divides the population size. He introduces a class of sampling procedures that encompasses systematic sampling with constant rate $1/a$ and simple random sampling of one unit per stratum.

Point processes, that we refer to as \emph{sampling processes} in the context of sampling, are the subject of a vast literature \citep[see for example][and references therein]{Daley02, Daley08}. \cite{Cordy93} and \cite{dev:89x} introduced independently the continuous analogue to the Horvitz-Thompson estimator for infinite population sampling. Different communities have studied point processes: mathematical physicists, probabilists and statisticians. A detailed state of the art in the study and simulation of some complex point processes can be found in \cite{Moller03, Moller07}. Many simulation methods for point processes are implemented in the $\mathbf{R}$ package \texttt{spatstat} \citep{spatstat}.

We introduce a new family of sampling methods that enable to continuously tune the distance between units in the sample. These processes allow to obtain small probabilities of jointly selecting neighboring units. These sampling methods are particularly efficient when the function of interest is smooth. Moreover, joint inclusion densities are positive and it is possible to estimate the sampling variance without bias.

The paper is organized as follows: in Section~\ref{S3}, we give a definition of sampling processes in continuous populations and we define the Poisson process, the binomial process and the systematic process. Important results of renewal process theory are recalled in Section~\ref{S4}. In Section~\ref{S5}, we define the systematic-Poisson and the systematic-binomial processes with tuning parameter $r$, and compute the joint densities. Section~\ref{S6} contains proofs for the asymptotic processes when $r$ tends to infinity. Simulations are presented in Section~\ref{S7} and our ideas on the choice of the tuning parameter in Section~\ref{St}. Finally, we give a brief discussion of the method and its advantages in Section~\ref{S8}.

\section{Sampling from a continuous population}\label{S3}

Following \cite{Macchi75} \citep[see also][]{moy:62}, a finite sample of size $n$ from a bounded and open subset $\Omega$ of $\R$ is a collection of units $X= \{x_{1},\dots,x_{n}\}$ without consideration for the order of the $x_{i}$'s. This definition matches those commonly used in finite population sampling \citep[see for example][for an introduction to finite population sampling theory]{coc:77}. A sampling process is a probability distribution on the space $\mathcal{S}$ of all such collections, for all $n\in \mathbb{N}$. Note that it is not directly a distribution on $\Omega^{\mathbb{N}}$ equipped with the tensor product of Borel sigma algebras $\mathcal{B}(\Omega)$ as the sample units are not ordered. An extensive discussion on the definition of a sampling point process on $\Omega$ and the corresponding symmetric measure on $(\Omega^{\mathbb{N}},\mathcal{B}^{\otimes \mathbb{N}}(\Omega))$ is given in \cite{Macchi75}. It is sufficient for our purpose to know that a sampling point process is a probability distribution on $(\mathcal{S},\mathcal{B})$ where $\mathcal{S}=\bigcup_{n\in \mathbb{N}}\Omega^{n}/\mathcal{R}^{n}$, with $x$ and $y$ in $\Omega^{n}$ being in the same class for the equivalence relation $\mathcal{R}^{n}$ if $x$ is a permutation of elements of $y$, and $\mathcal{B}$ is the sigma algebra generated by the family of counting events: 
$$
\left\{s \in \mathcal{S} \mbox{ such that }N(s,A)=p,\ A\in \mathcal{B}(\Omega),\ p\in \mathbb{N}\right\},
$$
and $N(s,A)$ is the number of elements of $s$ that are in $A$.

The first and second factorial moment measures of a sampling point process $X$ \citep[][]{moy:62} are defined respectively as 
$$
M_{1}=\left(
\begin{array}{ccc}
\mathcal{B}(\Omega) & \rightarrow & \R_{+} \\
      A & \mapsto & \E\left[N(X,A)\right]
\end{array}
\right),
$$
where $N(X,A)$ is the random number of elements of $X$ that are in $A$, and the second factorial moment measure is the extension to $\mathcal{B}(\Omega)^{\otimes 2}$ of
$$
M_{2}=\left(
\begin{array}{ccc}
\mathcal{B}(\Omega) \times \mathcal{B}(\Omega) & \rightarrow & \R_{+} \\
      A\times B & \mapsto & \E\left[N_{2}(X,A\times B)\right]
\end{array}
\right),
$$
where $N_{2}(X,A\times B)$ is the random number of pairs $(x_{i},x_{j})$, $i\neq j$ of elements  of $X$ such that $x_{i}\in A$ and $x_{j}\in B$. 

We call first and joint (second) order inclusion densities the respective densities of $M_{1}$ and $M_{2}$ with respect to the Lebesgue measure on $\Omega$ and $\Omega^{2}$ when they exist. In that case, the first order inclusion density $\pi$ is such that $M_{1}(A)=\int_A \pi(x) dx$, for all $A \in \mathcal{B}(\Omega)$, and the second-order inclusion density $\pi^{(2)}$ satisfies $M_{2}(A\times B)=\int_A\int_B \pi^{(2)}(x,y) dx dy$ for all  $A\times B \in \mathcal{B}(\Omega) \times \mathcal{B}(\Omega)$. Heuristically, the term $\pi(x)dx$ can be viewed as the probability that one unit of the sample lies between $x$ and $x+dx$, and $\pi^{(2)}(x,y) dx dy$ as the probability that one unit of the sample lies between $x$ and $x+dx$ and another between $y$ and $y+dy$, disregarding what happens outside of these sets. Likewise, one can define $k-$th order factorial moments and, when they exist, inclusion densities for $k\geq 3$.

We now turn to the problem of estimating the mean of a Lebesgue integrable function $z$ defined on $\Omega$:
$$
\overline{z}=\frac{1}{|\Omega|}\int_\Omega z(x) dx,
$$
where $|\Omega|$ denotes the Lebesgue measure of $\Omega$, using a finite random sample $X=\{x_{1},\dots, x_{n}\}$ of points in $\Omega$. Assuming that $\Omega$ is bounded, $|\Omega|$ is known and $X$ is a sampling process with inclusion density $\pi$, \cite{Cordy93} defines the continuous analogue of the Horvitz-Thompson estimator as:
$$
\widehat{\overline{z}}=\frac{1}{|\Omega|} \sum_{i=1}^{n} \frac{z(x_{i})}{\pi(x_{i})},
$$
and gives its properties. Under the assumption that $\pi(x)>0$ on $\Omega$ and that $z$ is bounded or non-negative, this estimator is unbiased \citep[][Theorem~1]{Cordy93}. If, moreover, $\int_{\Omega} 1/\pi(x) dx < +\infty$, the variance of this estimator is given by:
$$
\Var\left(\widehat{\overline{z}}\right) =\frac{1}{|\Omega|^{2}} \int_\Omega \frac{[z(x)]^{2}}{\pi(x)} dx + \int_\Omega \int_\Omega z(x)z(y) \left[ \frac{\pi^{(2)}(x,y)-\pi(x)\pi(y)}{\pi(x)\pi(y)}\right] dx dy,
$$

and if the joint inclusion density exists with $\pi^{(2)}(x,y)>0$ for all $x$, $y$ in $\Omega$ then:
\begin{equation}
\label{equ:varHH}
\widehat{\Var}\left(\widehat{\overline{z}}\right)=\frac{1}{|\Omega|^{2}} \sum_{x_{i}\in X} \left[\frac{z(x_{i})}{\pi(x_{i})}\right]^{2} + \sum_{x_{i}\in X} \sum_{\substack{x_{j}\in X\\ i \neq j}}z(x_{i})z(x_{j}) \left[ \frac{\pi^{(2)}(x_{i},x_{j})-\pi(x_{i})\pi(x_{j})}{\pi(x_{i})\pi(x_{j}) \pi^{(2)}(x_{i},x_{j})}\right],
\end{equation}
is an unbiased estimator of the variance of $\widehat{\overline{z}}$ \citep[][Theorem~2]{Cordy93}. As pointed out in \cite{Cordy93} the Horvitz-Thompson variance and variance estimator for a continuous population are slightly different from the finite population case. Conditions to ensure that these estimators are unbiased are, however, similar.

In the case of fixed size sampling process, the continuous analogue of the \cite{sen:53} and \cite{yat:gru:53} variance formula and estimator are:
\begin{equation}\label{SYG}
\Var\left(\widehat{\overline{z}}\right) = \frac{1}{2|\Omega|^{2}}\int_\Omega \int_\Omega \left[\frac{z(x)}{\pi(x)}- \frac{z(y)}{\pi(y)}\right]^{2} [\pi(x)\pi(y)-\pi^{(2)}(x,y)] dx dy,
\end{equation}
and 
\begin{equation}
\label{equ:varHHSYG}
\widehat{\Var}\left(\widehat{\overline{z}}\right)=\frac{1}{2|\Omega|^{2}} \sum_{x_{i}\in X} \sum_{\substack{x_{j}\in X\\ i \neq j}} \left[\frac{z(x_{i})}{\pi(x_{i})}-\frac{z(x_{j})}{\pi(x_{j})}\right] ^{2} \left[\frac{ \pi(x_{i})\pi(x_{j})-\pi^{(2)}(x_{i},x_{j})}{\pi^{(2)}(x_{i},x_{j})}\right],
\end{equation}
\citep[see][pp. 358-359]{Cordy93}. 

Throughout this paper, we assume that $\Omega\subset \mathbb{R}$ but the construction we used up to here also allows to work with other spaces. Indeed, \cite{Macchi75} and \cite{Cordy93} consider finite dimensional real vector spaces, and \cite{Daley02} work on complete separable metric spaces (polish spaces). Our purpose is to define sampling processes that have good properties regarding the estimation of $\overline{z}$.

In the following, we assume that $\Omega= (0,1)$. For an ordered set $\{x_{1},\dots,x_{n} \}$, we define the corresponding \emph{ inter-arrival times } $\{j_{1},\dots,j_{n-1} \}$ as the differences between two successive units, namely $j_{i} = x_{i+1}- x_{i}$, for $i = 1,\dots, n-1$. If $X$ is a point process, the corresponding inter-arrivals (also called waiting times) are random variables. A special class of point processes, called renewal processes, are obtained when the inter-arrival times are independent and identically distributed \citep[see for example][]{Mitov14}. In this paper, except when explicitly stated, the random inter-arrival times of our sampling processes are neither assumed to be identically distributed nor independent.

The binomial process \citep[see][pp. 23-28]{Moller03} is one of the most basic point processes and has a fixed sample size.
\begin{definition}[Binomial process]
Let $f$ be a PDF on $\Omega = (0,1)$ and let $n\in \mathbb{N}$ be a natural number. The \emph{binomial point process} of $n$ points in $\Omega$ with PDF $f$ is the point process whose realizations consist of $n$ points generated from i.i.d distributions with common PDF $f$.
\end{definition}
When the sample space $\Omega$ is bounded, inter-arrival times of the binomial process are not independent. Indeed, the sum of these inter-arrival times is necessarily no larger than the diameter of $\Omega$. In the following, we only use binomial processes in $(0,1)$ with i.i.d. points selected according to a uniform distribution on $(0,1)$.

The $k-$th order joint inclusion density of a binomial process of size $n$ at $x_{1}<\dots< x_{k}$ is given by:
$$
\pi^{(k)}(x_{1},\dots, x_{k}) = n(n-1)\cdots (n-k+1)= \frac{n!}{(n-k)!},\ k=1,\dots,n.
$$
In particular, $\pi(x_{i})=n$ if $x_{i} \in (0,1)$, and $\pi^{(2)}(x_{i},x_{j}) =n(n-1)$ if $x_{i},\ x_{j} \in (0,1)$. The $n-$th order joint inclusion density is equal to $n!$ on samples $x_{1},\dots,x_{n}$ with $0<x_{1}<x_{2}<\cdots<x_{n}<1$.

With $\Omega=(0,1)$ and a fixed size $n$, we can define the circular inter-arrival times as $J_{i}=(x_{i+1}-x_{i}) \mod 1$, $i=1,\dots,n-1$ and $J_{n}=(x_{1}-x_{n}) \mod 1$. As we see in Proposition~\ref{prop1}, the binomial process can be obtained by generating the circular inter-arrival times according a Dirichlet distribution. The Dirichlet distribution with parameter $\alphag$, denoted $\mathpzc{Dir}(\alphag)$ is a multivariate distribution with PDF given by
\begin{equation}
f(x_{1},\dots,x_{n}) = \frac{1}{B(\alphag)} \prod_{i=1}^{n} x_{i} ^{\alpha_{i}-1},
\label{dirichlet}
\end{equation}
where
$
x_{i}>0,
$
for $i,1,\dots,n$
$
\sum_{i=1}^{n} x_{i}=1,
$
$\alpha_{i}>0$, $\alphag=(\alpha_{1},\dots,\alpha_{n})$ and $\B(\alphag)$ is the multinomial Beta function. Properties of the Dirichlet distribution are given in \cite[][pp. 485-528]{kotzbalakrishnan2000continuous}. 
\begin{proposition}\label{prop1}
Let $\Jb^{c}=(J_{1}^{c},\dots,J_{n}^{c})\sim \mathpzc{Dir}({\bf 1}_{n})$, where ${\bf 1}_{n}$ is a vector of $n$ ones and $u\sim \mathpzc{U}(0,1)$, uniformely distributed on $(0,1)$, is independent from $\Jb^{c}$. The sorted values in
\begin{equation}\label{cccc}
\left\{\left( \sum_{j=1}^i J_{1}^c + u \right)\mod 1 ,\ i=1,\dots,n\right\},
\end{equation}
follow a binomial process on $(0,1)$ with uniform density.
\end{proposition}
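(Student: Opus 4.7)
The plan is to exhibit a concrete construction of $n$ i.i.d.\ $\mathpzc{U}(0,1)$ random variables whose unordered set equals~(\ref{cccc}) in distribution. First I would start from $u\sim\mathpzc{U}(0,1)$ and $W_1,\dots,W_{n-1}$ i.i.d.\ $\mathpzc{U}(0,1)$, mutually independent, and set $V_1=u$ and $V_i=(u+W_{i-1})\mod 1$ for $i=2,\dots,n$. Conditioning on $u$ and using the translation invariance of the uniform law on the torus $\R/\Z$, the variables $V_2,\dots,V_n$ are i.i.d.\ $\mathpzc{U}(0,1)$ conditionally on $u$; since this conditional law does not depend on $u$, they are also independent of $V_1=u$. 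Hence $(V_1,\dots,V_n)$ are i.i.d.\ $\mathpzc{U}(0,1)$, so that the unordered set $\{V_1,\dots,V_n\}$ is by definition a binomial process on $(0,1)$ with uniform density.

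Next I would rewrite this unordered set in terms of Dirichlet-distributed spacings. Order $W_1,\dots,W_{n-1}$ as $W_{(1)}<\cdots<W_{(n-1)}$ and define $G_1=W_{(1)}$, $G_i=W_{(i)}-W_{(i-1)}$ for $2\le i\le n-1$, and $G_n=1-W_{(n-1)}$. The classical uniform-spacings identity gives $(G_1,\dots,G_n)\sim\mathpzc{Dir}({\bf 1}_n)$, and since this vector is a measurable function of $(W_1,\dots,W_{n-1})$ alone, it is independent of $u$. Setting $S_i=G_1+\cdots+G_i$, one has $W_{(i)}=S_i$ for $i=1,\dots,n-1$ and $S_n=1$, hence, as unordered sets,
\begin{equation*}
\{V_1,\dots,V_n\}=\{u\}\cup\{(u+W_{(i)})\mod 1:i=1,\dots,n-1\}=\{(u+S_i)\mod 1:i=1,\dots,n\},
\end{equation*}
using $(u+1)\mod 1=u$ for $u\in(0,1)$ to recover the term $i=n$.

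Since $(u,G_1,\dots,G_n)$ then has exactly the joint law prescribed for $(u,\Jb^c)$ in the statement of the proposition, the right-hand side above is distributionally identical to~(\ref{cccc}), and the first step identifies this law with the binomial process. The only substantive external input is the classical identification of uniform spacings with the $\mathpzc{Dir}({\bf 1}_n)$ law, which I would either cite or derive in one line from the unit-Jacobian change of variables $(W_{(1)},\dots,W_{(n-1)})\mapsto(G_1,\dots,G_{n-1})$ applied to the ordered-uniform density $(n-1)!$ on the simplex. The main obstacle to watch for is purely bookkeeping: matching indices so that the set of $n$ points built from the $n$ partial sums (with $S_n=1$ giving back $u=V_1$) really has the correct cardinality and coincides with the $V_i$ construction.
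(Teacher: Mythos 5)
Your proof is correct, but it takes a genuinely different route from the paper's. The paper argues \emph{backwards}: it writes down the joint density of the sorted output $\Xb$ by observing that a given ordered configuration $x_{1}<\dots<x_{n}$ arises exactly when $u=x_{i}$ for one of the $n$ indices and the circular inter-arrivals match, then sums the $n$ (almost surely disjoint) contributions $f_{u}(x_{i})f_{\Jb^{c}}(\cdot)=1\cdot(n-1)!$ to obtain $n!$, which is recognized as the order-statistic density of $n$ i.i.d.\ $\mathpzc{U}(0,1)$ variables. You instead argue \emph{forwards}: you build the binomial process explicitly as $u$ together with $n-1$ further independent uniforms shifted by $u$ on the torus $\R/\Z$ (the rotation-invariance step correctly yields joint i.i.d.-ness of all $n$ points), and then invoke the classical identification of uniform spacings with the $\mathpzc{Dir}({\bf 1}_{n})$ law to match the joint distribution of $(u,G_{1},\dots,G_{n})$ with that of $(u,\Jb^{c})$; the bookkeeping with $S_{n}=1$ recovering $u$ as the $i=n$ term is handled correctly. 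The two arguments are essentially inverse change-of-variables computations: what the paper gets by summing a density over the $n$ preimages of a configuration, you get by citing the spacings lemma (itself a one-line unit-Jacobian calculation on the ordered-uniform density $(n-1)!$). Your version makes the role of the random shift $u$ more transparent and avoids having to justify that the $n$ preimage events are non-overlapping and exhaustive, at the cost of introducing the auxiliary variables $W_{1},\dots,W_{n-1}$ and the spacings result as an external input.
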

\begin{proof}
With parameter ${\bf 1}_{n}$, the PDF in~(\ref{dirichlet}) simplifies to
$$
f_{\Jb^{c}}(j_{1},\dots,j_{n}) = (n-1)!,
$$
with $\sum_{i=1}^{n} j_{i}=1$. 
Let $\Xb=(X_{1},\dots,X_{n})$ be the sorted values~(\ref{cccc}). Since the sum of all $j_{i}$'s is equal to $1$, we see that a given set of numbers $x_{1}<\dots<x_{n}$ in $(0,1)$ is obtained exactly when $u={x_{i}}$ for some $i$ and the inter-arrival times allow to obtain $(x_{1},\dots,x_{n})$. These events are almost surely non overlapping and $u$ is independent from $\Jb^{c}$. It follows, if $f_{u}$ is the density of $u$ and $f_{\Jb^{c}}$ the density of $\Jb^{c}$, that
\begin{eqnarray*}
\lefteqn{ f_{\Xb}(x_{1},\dots,x_{n}) }\\
&=& f_{u}(x_{1})f_{\Jb^{c}}(x_{2}-x_{1},\dots ,x_{n}-x_{n-1}, x_{1}-x_{n}+1) \\
&+& f_{u}(x_{2})f_{\Jb^{c}}(x_{3}-x_{2},\dots, x_{1}-x_{n}+1, x_{2}-x_{1})\\
&\vdots&\\
&+& f_{u}(x_{n})f_{\Jb^{c}}(x_{1}-x_{n}+1,\dots, x_{n-1}-x_{n-2}+1, x_{n}-x_{n-1})\\
&=& n (n-1)!=n!.
\end{eqnarray*}
\end{proof}

The Poisson process \citep[see for example][]{Daley02, Moller03} is one of the basic and most studied point processes. It is particularly useful for the construction of more complex processes.
\begin{definition}[Poisson process]
A point process $X$ on $\Omega$ is a Poisson process with intensity $\lambda> 0 $ if the following properties are satisfied:
\begin{enumerate}
\item For any $A\in \mathcal{B}(\Omega)$, $N(X,A)$ follows a Poisson distribution with parameter $\lambda |A|$, where $|A|$ denotes the Lebesgue measure of $A$. If $|A|=0$, then $N(X,A)=0$ almost surely.
\item For any $n\in \mathbb{N}$, conditional on $N(X,A)=n$, the distribution of $X_{A}$ (the trace of the random set $X$ on $A$) is that of a binomial process on $A$ with size $n$ and constant PDF on $A$.
\end{enumerate}
\end{definition}
There exist several equivalent definitions of the Poisson process, but this one highlights the link with the binomial process. There is a similar link in finite population sampling, where conditioning a Bernoulli sampling design on its size yields a simple random sampling design \citep[see][pp. 43-50]{til:06}. Bernoulli sampling can thus be considered as the discrete analogue to the Poisson sampling process. Inter-arrival times of the Poisson process with intensity $\lambda$ are i.i.d. and follow an exponential distribution with parameter $\lambda$ \citep{Daley02}. 

It follows from the definition that the first order inclusion density of the Poisson sampling process on $\Omega = (0,1)$ is equal to $\lambda$, and using the independence property, that the $k-$th order joint inclusion density is equal to $\pi^{(k)}(x_{1},\dots, x_{k}) =\lambda^{k}$ if $x_{1} < \dots < x_{k}$.

The systematic process, or deterministic renewal process in the interval $(0,1)$ is defined as follows:
\begin{definition}[Systematic process]\label{def::sys_design}
Let $0<c<1$ and $u\sim \mathpzc{U}\left(0,c\right)$. A systematic sampling process with sampling interval $0<c<1$ is defined as the distribution of $\left\{x_{1},\dots,x_{n}\right\}$ where
$$
x_{k} = u+ k\cdot c,\quad k= 0,\dots, n-1,
$$
and $n$ is such that $u+n\cdot c < 1 \leq u+(n+1)\cdot c$.
\end{definition}

\section{Renewal processes}\label{S4}
A renewal process, or renewal sequence, is a stochastic process defined on the positive real line. It is completely characterized by the distribution of its independent and identically distributed inter-arrival times. For example, the Poisson process is a renewal process with exponentially distributed inter-arrival times when its intensity $\lambda$ is constant. The following definition can be found in \cite{Mitov14}.
\begin{definition}[Renewal process]
A renewal process is any process $X=\left\{ X_{k},\ k=0,\ 1,\ 2,\dots\right\}$ with 
$$
 X_{k} = X_{0} + \sum_{i=1}^{k} J_{i},\quad k=\ 1,\ 2,\dots
$$
where $X_{0}$ is a given non-negative random variable and $J_{1},\ J_{2},\dots$ is a sequence of i.i.d non-negative random variables with common Cumulative Distribution Function (CDF) $F$. If $X_{0}=0$ a.s., the process is called a pure renewal process (or simply a renewal process). If $P(X_0>0) >0$ then the process is called a delayed renewal process \citep[see][]{Resnick92}.
\end{definition}
The counting measure $N(t)$ (or renewal counting process) of a pure renewal process $X$ is defined in \cite{Mitov14} as:
$$
N(t)= \sup \left\{ k \geq 0 \mbox{ : } X_{k}\leq t\right\} = \sum_{i=1}^{\infty} \1_{\{X_{i} \leq t\}}, 
$$
where $\1_{\{X_{i} \leq t\} }$ denotes the indicator function. \cite{Daley02} then define the \emph{forward recurrence time} of a renewal process as:
$$
B(t) = X_{N(t)+1} -t, \quad t\geq 0.
$$
It is the random time between an arbitrarily chosen instant $t$ and the following occurrence of the process (see Figure~\ref{graph:lifetime_dist}).
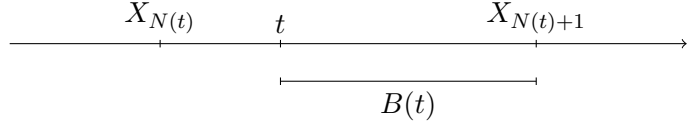
\begin{figure}[htb!]
\centering
\begin{tikzpicture}[scale=1]
\draw[->] (-4,0)--(5,0);
\draw plot (-2,0) node[above]{$X_{N(t)}$};
\draw (-2,0.05)--(-2,-0.05);
\draw plot (-.4,0) node[above]{$t$};
\draw (-.4,0.05)--(-.4,-0.05);
\draw plot (3,0) node[above]{$X_{N(t)+1}$};
\draw (3,0.05)--(3,-0.05);
\draw (-.4,-0.45)--(-.4,-0.55);
\draw (-.4,-0.5)--(3,-0.5) node[midway, below]{$B(t)$};
\draw (3,-0.45)--(3,-0.55);
\end{tikzpicture}
\caption{forward recurrence time $B(t)$}
\label{graph:lifetime_dist}
\end{figure}

An important result of renewal theory concerns the limiting distribution of the forward recurrence time $B(t)$ when $t\rightarrow\infty$. Under some mild conditions \citep[see][theorem 1.18]{Mitov14}, if the inter-arrival times have CDF $F$ and finite expectation $\mu >0$, $B(t)$ converges in distribution when $t\rightarrow\infty$ to a random variable with CDF $F_{0}$ defined as:
\begin{equation}
F_{0}(x)=\lim_{t\rightarrow\infty} P(B(t)\leq x) = \frac{1}{\mu} \int_{0}^{x} \left[1- F(t)\right] dt,\ x\geq 0.
\label{F0}
\end{equation}
The PDF of this limiting distribution is equal to:
$$ 
f_{0}(x)= \frac{1}{\mu} \left[1- F(x)\right],\ x\geq 0. 
$$
For example, if the inter-arrival times follow a Gamma distribution with shape parameter $r$ and rate parameter $\lambda$, denoted $\mathpzc{Gamma}(r,\lambda)$, their distribution function is given by $F(x)= \gamma(r, \lambda x)/\Gamma(r)$, where $\Gamma(r) = \int_{0}^{+\infty} t^{r-1}e^{-t} dt$ and $\gamma(r,x) = \int_{0}^{\lambda x} t^{r-1}e^{-t} dt$. The corresponding limiting forward recurrence time distribution follows a forward Gamma distribution $\mathpzc{ForG}(r,\lambda)$ with PDF:
$$
f_{0}(x) = \frac{\lambda \Gamma(r,\lambda x)}{\Gamma(r+1)},\ x\geq 0,
$$
with $\Gamma(r,\lambda x) = \int_{\lambda x}^{+\infty} t^{r-1}e^{-t} dt$. 

Another property of renewal processes that will be essential in the following is given in Proposition~\ref{fondamental}.
\begin{proposition}\label{fondamental}
Let $\left(J_{i}\right)_{i\geq 1}$ be a sequence of i.i.d non-negative continuous random variables, with expectation $\E(J_{i})=\mu$, CDF $F(x)$, and PDF $f(x)$. Let also $f_{0}$ be the function defined by:
$$
f_{0}(x) =\frac{1}{\mu} \left[1- F(x)\right] \mbox{ if }x\geq 0\mbox{ and } 0\mbox{ if }x<0.
$$
Then, equation~\ref{eqfond} holds
\begin{equation}\label{eqfond}
f_{0}(x) + \int_{0}^{x} f_{0}(x-t)\sum_{k=1}^{\infty} f^{k*}(t)dt= \frac{1}{\mu}, \mbox{ for all }x\geq 0,
\end{equation}
where $f^{k*}$ denotes the $k-$fold convolution of the function $f(x)$ with itself, i.e. the PDF of $\sum_{i=1}^{k}J_{i}$.
\end{proposition}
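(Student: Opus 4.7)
The plan is to apply the Laplace transform to both sides of~(\ref{eqfond}) and invoke uniqueness. Let $\phi(s)=\int_{0}^{\infty} e^{-sx} f(x)\,dx$ denote the Laplace transform of $f$. Integrating by parts (using $F(0)=0$ and $F(\infty)=1$ so that the boundary terms vanish), I would first obtain
$$
\hat{f}_{0}(s) = \frac{1}{\mu}\int_{0}^{\infty} e^{-sx}[1-F(x)]\,dx = \frac{1-\phi(s)}{s\mu}.
$$
Since $f^{k*}$ is the PDF of $J_{1}+\dots+J_{k}$, the convolution theorem gives $\widehat{f^{k*}}(s)=\phi(s)^{k}$. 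For $s>0$ one has $0\le \phi(s)<1$, strict because $f$ is a PDF supported on $[0,\infty)$ with a non-trivial part, so monotone convergence lets us exchange sum and integral and sum the geometric series to get $\hat{u}(s)=\sum_{k\ge 1}\phi(s)^{k}=\phi(s)/(1-\phi(s))$, where $u(t):=\sum_{k\ge 1} f^{k*}(t)$ is the renewal density.

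Next, the convolution $(f_{0}*u)(x)=\int_{0}^{x} f_{0}(x-t)u(t)\,dt$ has Laplace transform $\hat{f}_{0}(s)\hat{u}(s)=\phi(s)/(s\mu)$, so summing both pieces I obtain
$$
\widehat{f_{0}+f_{0}*u}(s) \;=\; \frac{1-\phi(s)}{s\mu} + \frac{\phi(s)}{s\mu} \;=\; \frac{1}{s\mu},
$$
which is exactly the Laplace transform of the constant function $x\mapsto 1/\mu$ on $[0,\infty)$. Uniqueness of the Laplace transform on continuous (or locally integrable) functions of polynomial growth then gives the identity for all $x\ge 0$.

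The main technical obstacle is to ensure that $u$ is a well-defined locally integrable function so that both sides are legitimate objects and the convolution theorem applies. This is standard renewal theory: the renewal function $U(x)=\sum_{k\ge 1}F^{k*}(x)$ is finite for every finite $x$ whenever $\mu>0$, and $u$ is its density. An equally valid, transform-free alternative would be a probabilistic argument: decompose the density of the first renewal after time $x$ according to whether $N(x)=0$ (giving density $f(x+b)$) or $N(x)\ge 1$ with the last renewal at $t\in(0,x]$ (giving $\int_{0}^{x} u(t)f(x+b-t)\,dt$), then integrate the resulting PDF of the forward recurrence time $B(x)$ over $b\in(0,\infty)$; the total mass equals $1$, and substituting $f_{0}(x)=\mu^{-1}[1-F(x)]$ reproduces~(\ref{eqfond}) after dividing by $\mu$. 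I would mention this second route as sanity check but present the Laplace transform computation as the main proof, since it is short and mechanical.
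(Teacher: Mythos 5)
Your Laplace-transform argument is correct. The computation $\hat{f}_{0}(s)=[1-\phi(s)]/(s\mu)$ via integration by parts, $\hat{u}(s)=\phi(s)/[1-\phi(s)]$ via Tonelli (all terms non-negative, $\phi(s)<1$ for $s>0$ since $J_{i}$ has no atom at $0$ and $\mu>0$), and the resulting identity $\widehat{f_{0}+f_{0}*u}(s)=1/(s\mu)$ are all sound, and your probabilistic sanity check (the forward recurrence time density integrates to one) is the correct interpretation of~(\ref{eqfond}) as stationarity of the delayed renewal process started with $X_{0}\sim f_{0}$. The paper, however, relegates the result to a short direct argument in the appendix rather than inverting a transform: since $(F*f^{k*})(x)=F^{(k+1)*}(x)$, one gets the telescoping identity
\begin{equation*}
\int_{0}^{x}\left[1-F(x-t)\right]u(t)\,dt=U(x)-\sum_{k\geq 2}F^{k*}(x)=F(x),\qquad U=\sum_{k\geq 1}F^{k*},
\end{equation*}
and adding $1-F(x)$ and dividing by $\mu$ yields~(\ref{eqfond}) immediately, with the only technical input being local finiteness of the renewal function $U$ (the same input you need for $u$ to be well defined). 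What the transform route buys is mechanical simplicity; what it costs is the last step, which you state too quickly: uniqueness of the Laplace transform for locally integrable functions only gives equality \emph{almost everywhere}, whereas the proposition asserts equality for all $x\geq 0$. To close this you must check that the left-hand side is continuous: $f_{0}$ is continuous because $F$ is (as $J_{i}$ has a density), and $f_{0}*u$ is continuous by dominated convergence because $f_{0}$ is bounded by $1/\mu$ and $U$ is locally finite. With that one sentence added, your proof is complete and fully rigorous, though somewhat longer than the direct computation.
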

Proposition~\ref{fondamental} is a classical result of renewal process theory. We give a simple proof of it in appendix. Different proofs can be found for instance in \cite[p.47]{Mitov14} or in \cite[p.75]{Daley02}. Proposition~\ref{fondamental} implies that the delayed renewal process, obtained by generating $X_{0}$ with CDF $F_{0}$ and the $J_{i}$'s independently with CDF $F$, has, among other properties, a constant first-order inclusion density equal to $1/\mu$ on $\R_{+}$. Such a delayed renewal process has stationary increments and is called a stationary renewal process \citep{Mitov14}.

A special case is that of the Poisson process with intensity $\lambda$. It is a renewal process whose inter-arrival times follow an exponential distribution $\mathpzc{Exp}(\lambda)=\mathpzc{Gamma}(1,\lambda)$. It turns out that its limiting forward recurrence time distribution is also an exponential distribution with parameter $\lambda$, so that $F_{0}=F$. This is a consequence of the memory-less property of the exponential distribution.

\section{Quasi-systematic sampling}\label{S5}
Our aim is to propose new sampling processes that allow to control the selection probability of neighboring units by adjusting the joint inclusion density. Spreading the sample units over $\Omega$ has some advantages when units close together are similar (e.g. when the function $z$ has small variations). 

The systematic sampling process allows to select samples that are very well spread. However, it does not possess a positive second-order inclusion density so that \cite{Cordy93}'s Horvitz-Thompson variance estimator may not be used. We are thus interested in sampling processes with inter-arrival times that have a positive variance smaller than that of Poisson or binomial processes. Without auxiliary information that would encourage us to do otherwise, we focus on sampling processes with constant first-order inclusion density on $\Omega$.

The family of sampling processes that we consider can be seen as a compromise between basic sampling processes (Poisson and binomial processes) and systematic sampling. The rough idea is the following: in a first phase sampling procedure, a sample of expected size $n\cdot r$, with $n,r\geq 0$, is selected using an elementary sampling process. In the second selection phase, we use a systematic sampling to draw one unit every $r$ units of the first phase sample. We call these processes quasi-systematic sampling processes. We consider the ``systematic-Poisson'' and ``systematic-binomial'' processes obtained when the first phase processes are respectively the Poisson and the binomial process. The first and second-order inclusion densities of these sampling processes have a closed form.

Consider the following two-phases sampling process: a first phase sample is generated from a Poisson sampling process with constant intensity $\lambda$. Then, a systematic sample is drawn inside this first phase sample with rate $1/r$ (i.e. a starting unit is randomly chosen among the $r$ first units of the first phase sample and is kept in the second phase sample along with every other $r$ unit). In an interval of length $1$, the expected number of units selected by the Poisson process is $\lambda$. Thus, by setting $\lambda = n \cdot r $, where $n$ is the targeted final average sample size and $r$ is freely chosen, we ensure that the expected final sample size is $n$. 

The inter-arrival times of the first sample are, by definition, realizations of an exponential random variable with parameter $\lambda$. After the systematic sampling phase, inter-arrival times are realizations of sums of $r$ independent exponential random variables i.e. of non-negative random variables  $\mathpzc{Gamma}(r,\lambda)$ with PDF $f(x)=x^{r-1} e^{-\lambda x}\lambda^r/\Gamma(r)$. Thus, except for the first inter-arrival, this process is a renewal process with $\mathpzc{Gamma}(r,\lambda)$ renewal distribution. 

As we are set on having a constant first-order inclusion density, and thanks to Proposition~\ref{fondamental}, we choose to generate the first inter-arrival with a $\mathpzc{ForG}(r,\lambda)$ distribution and the following ones with independent $\mathpzc{Gamma}(r,\lambda)$ distributions. The first and second-order densities of the obtained systematic-Poisson sampling process are given in Proposition~\ref{denssp}. Note that parameters $n$ and $r$ do not in fact need to be integer numbers. Algorithm~\ref{alg:poiss_qs} can be used to select a systematic-Poisson sample in $(0,1)$.
\begin{algorithm}[htb!]
\caption{Generates a systematic-Poisson sample with parameters $\lambda$ and $r$.}
\label{alg:poiss_qs}
\begin{algorithmic}
    \REQUIRE $\lambda >0$, $r>0$;
    \STATE Generate $x_{1}\sim \mathpzc{ForG}(r,\lambda)$;
    \STATE i=2;
    \WHILE{$x_{i} < 1$}
		\STATE Generate $J_{i} \sim \mathpzc{Gamma}(r,\lambda)$
        \STATE $x_{i}= x_{i-1} + J_{i};i=i+1;$
        \IF{$x_{i}>1$}
        		\STATE $n=i-1$
        	\ENDIF
    \ENDWHILE
    \RETURN $\{x_{1},x_{2},\dots,x_{n}\}$ ordered systematic-Poisson sample with parameters $r$ and $\lambda$.
\end{algorithmic}
\end{algorithm}
\begin{proposition}\label{denssp}
Let us consider a systematic-Poisson process on $(0,1)$ with positive parameters $r$ and $\lambda$. Then
\begin{enumerate}
\item
the first-order inclusion density is given by:
$\pi(x) = \lambda/r, $
for any $x\in (0,1)$,
\item the second-order inclusion density is given by
\begin{equation}
 \pi^{(2)}(x,y) = \frac{\lambda}{r}
e^{-\lambda |x-y|} \sum_{m=1}^{\infty} \frac{\lambda^{mr}}{\Gamma(mr)}|x-y|^{mr-1},
\label{piklsystpoiss}
\end{equation}
for any $x,y \in (0,1)$.
\end{enumerate}
\end{proposition}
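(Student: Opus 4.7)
The plan is to exploit the renewal structure of the process. After the initial $\mathrm{ForG}(r,\lambda)$ arrival, the subsequent inter-arrivals are i.i.d.\ $\mathrm{Gamma}(r,\lambda)$ with mean $\mu=r/\lambda$, so the process is precisely the stationary (delayed) renewal process whose inter-arrival CDF $F$ has associated forward CDF $F_{0}$. Part~1 is then immediate from Proposition~\ref{fondamental}: the first-order inclusion density is the sum over $i\geq 1$ of the marginal densities $f_{X_{i}}$, which decomposes as $f_{0}(x) + \int_{0}^{x} f_{0}(x-t)\sum_{k\geq 1} f^{k*}(t)\,dt$ and therefore equals $1/\mu = \lambda/r$ for all $x>0$, in particular on $(0,1)$.

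For part~2 I would sum the joint densities over ordered pairs of arrivals. Since the inter-arrivals are strictly positive almost surely, $X_{1}<X_{2}<\cdots$, so for $0<x<y<1$ the second-order factorial density can be written as
$$
\pi^{(2)}(x,y) = \sum_{i\geq 1}\sum_{k\geq 1} f_{X_{i},X_{i+k}}(x,y).
$$
The key observation is that $X_{i+k}-X_{i}=J_{i}+J_{i+1}+\cdots+J_{i+k-1}$ is a sum of $k$ independent $\mathrm{Gamma}(r,\lambda)$ variables and is independent of $X_{i}$; hence $X_{i+k}-X_{i}\sim\mathrm{Gamma}(kr,\lambda)$ with density $f^{k*}(t)=\lambda^{kr}t^{kr-1}e^{-\lambda t}/\Gamma(kr)$, and by independence $f_{X_{i},X_{i+k}}(x,y)=f_{X_{i}}(x)\,f^{k*}(y-x)$. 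Tonelli's theorem (all summands non-negative) combined with part~1 then yields
$$
\pi^{(2)}(x,y) = \Bigl(\sum_{i\geq 1} f_{X_{i}}(x)\Bigr)\sum_{k\geq 1} f^{k*}(y-x) = \frac{\lambda}{r}\sum_{k\geq 1} f^{k*}(y-x).
$$

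Plugging in the explicit Gamma PDF and factoring $e^{-\lambda(y-x)}$ out of the sum gives~(\ref{piklsystpoiss}) on $\{x<y\}$; by symmetry of $\pi^{(2)}$ in its two variables, the general case follows through the replacement $y-x \mapsto |x-y|$. The main conceptual step — and the only real obstacle — is the recognition that $\sum_{i}f_{X_{i}}$ is precisely the renewal density handled by Proposition~\ref{fondamental}, so that it factors out of the double sum as the constant $\lambda/r$; once this is granted, only elementary Gamma convolutions remain. Convergence of the series in~(\ref{piklsystpoiss}) for $|x-y|\leq 1$ is straightforward via Stirling, since $\lambda^{mr}|x-y|^{mr-1}/\Gamma(mr)$ decays super-exponentially in $m$, which justifies all the exchanges of summation performed above.
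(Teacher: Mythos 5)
Your proof is correct and follows essentially the same route as the paper: part~1 via Proposition~\ref{fondamental}, and part~2 via the factorization $\pi^{(2)}(x,y)=\pi(x)\sum_{k\geq 1}f^{k*}(|x-y|)$ with the Gamma convolutions made explicit. The only difference is that you derive that factorization from first principles through the double sum over pairs of arrivals, whereas the paper simply cites Example~5.4(b) of \cite{Daley02} for it; your version is self-contained but otherwise identical in substance.
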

\begin{proof}\leavevmode
\begin{enumerate}
\item is a direct application of Proposition~\ref{fondamental}, considering that the expectation of a $\mathpzc{Gamma}(r,\lambda)$ distribution is equal to $r/\lambda$.
\item From \cite[p.139, Example 5.4(b)]{Daley02}, we have that, for $0\leq x<y$, 
$$
\pi^{(2)}(x,y)=\frac{\lambda}{r}u(y-x),
$$ 
where $u$ is the first-order density of the renewal process $X=(X_{i})_{i\geq 2}$, $X_{i} \sim \mathpzc{Gamma}(r,\lambda)$. $u(x)$ is equal to $\sum_{k=1}^\infty f^{k*}(x)$, where $x\geq 0$ and $f$ is the PDF of $X_{i}$. As the sum of $m$ independent $\mathpzc{Gamma}(r,\lambda)$ variables is a $\mathpzc{Gamma}(mr,\lambda)$ and has PDF:
$$
f(h;m)= \frac{\lambda^{mr}}{\Gamma(mr)}e^{-\lambda h } h^{mr-1},\mbox{ }h\geq 0,
$$
we can infer that the counting measure of the renewal process $X$ has renewal density  
$$
u(h) = \sum_{m=1}^{\infty} f(h;m) = \sum_{m=1}^{\infty}\frac{\lambda^{mr}}{\Gamma(mr)}e^{-\lambda h } h ^{mr-1},\mbox{ }h\geq 0,
$$
and the result follows.
\end{enumerate}
\end{proof}
The joint inclusion density equation simplifies for some values of $r$. Set $\lambda=n\cdot r$ with $n$ the expected the sample size. With $r=1$ we get the usual Poisson process and thus $\pi^{(2)}(x,y)=\lambda^{2}= n^{2}$.

The plot of $\pi^{(2)}(x,y)$ as a function of $|x-y|$ is given in Figure~\ref{G1} for different values of $r$. Except for $r=1$, $\pi^{(2)}(x,y)=0$ if $x=y$. The larger $r$ is, the flatter the plot is near the origin: the sampling design avoids selecting neighboring units. We see that, when $r$ is very large, the function concentrates on the inverse of the sampling rate and its multiples. It illustrates that the systematic-Poisson sampling design is close to a systematic sampling when $r$ is large.
\begin{figure}[htb!]
\begin{center}
\includegraphics[scale=0.6]{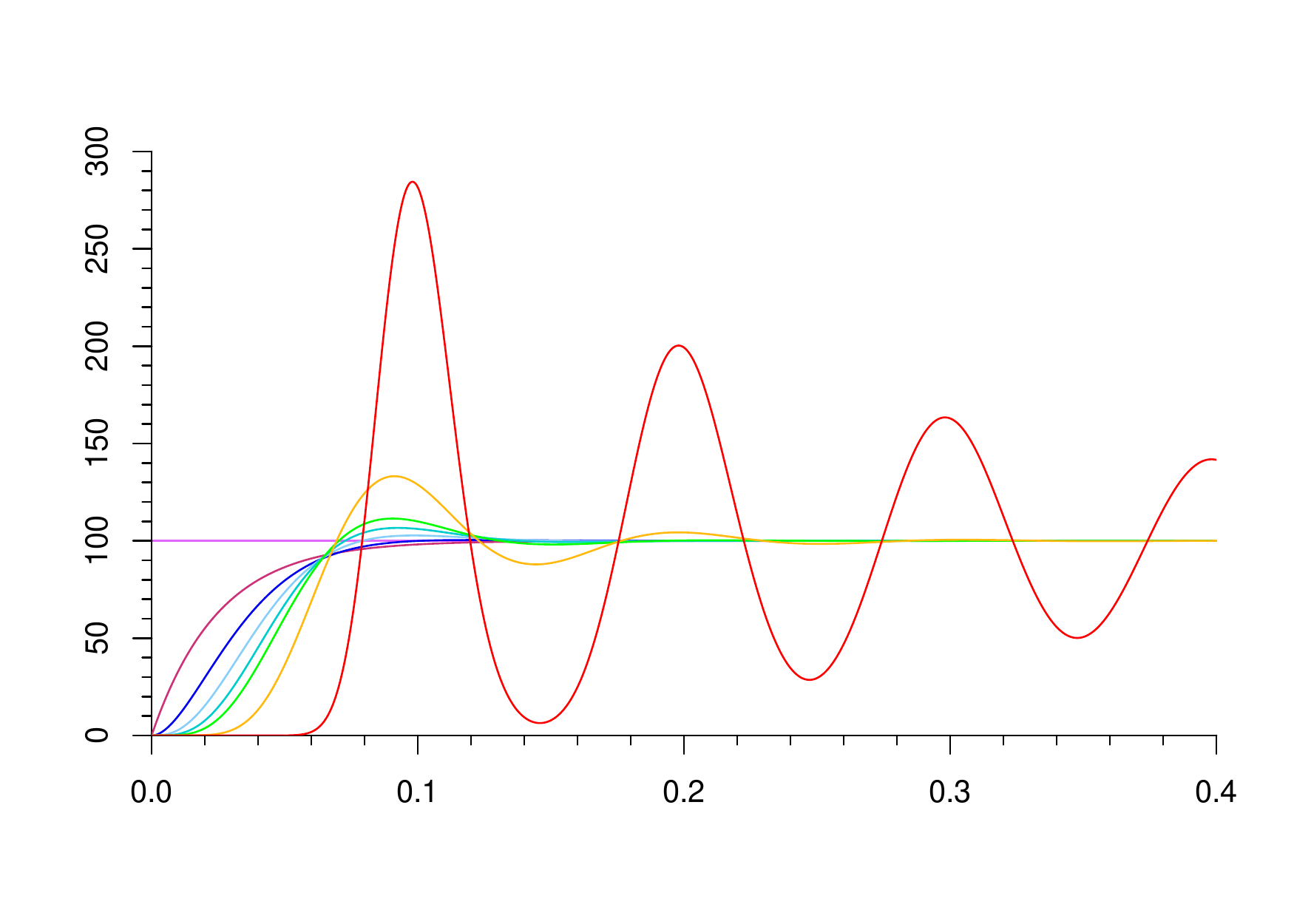}
\caption{\label{G1}Joint inclusion density $\pi^{(2)}(x,y)$ as a function of $|x-y|$ for systematic-Poisson sampling, for $n=10$, $r=1,2,3,4,5,6,10,50$ and $\lambda = n\cdot r$. The range of oscillations increases with $r$. When $r=1$, $\pi^{(2)}(x,y)$ is constant.}
\end{center}
\end{figure}

The systematic-binomial process is a fixed size sampling process with constant inclusion density on $(0,1)$. It is obtained, for example, by taking a realization of a binomial process of size $n\cdot r$, selecting a systematic sub-sample with rate $1/r$ inside the first phase units and finally adding, modulo~1, a random number $u$ generated from a $\mathpzc{U}(0,1)$ distribution. This last step ensures that the circular inter-arrival time $x_{1} + (1-x_{n})$ has the same distribution as the other inter-arrival times. An illustration of the sampling procedure is given in Figure~\ref{graph:sample_circle}.
\begin{figure}[htb!]
\centering
\begin{tikzpicture}[scale =2]
		\draw[color= gray, opacity = 0.6](0,0) circle (1);
		\draw[black, line width = .8pt] ( 95.583118731156:{1-0.03})--( 95.583118731156:{1+0.03} ) node[above]{{\footnotesize \textit{u}}};
		\draw [color= gray, opacity = 0.6]({-pi},-1.5)--({pi}, -1.5);
		\draw[black, line width = .5pt] ({-pi},-1.55)--({-pi}, -1.45) node[above]{{\footnotesize \textit{u}}};
		\draw[black, line width = .5pt] ({pi},-1.55)--({pi}, -1.45) node[above]{{\footnotesize \textit{u}}};
        \begin{axis}[
	    anchor=origin,  
	    x=1cm, y=1cm,   
	    hide axis
					]
      	\addplot[only marks,mark options={color=gray, opacity=.6}, mark size = 0.4pt] table  {pts_sample_comp_circle.csv};
        \addplot[only marks,mark options={color=gray, opacity=.6}, mark size = 0.4pt] table  {pts_sample_comp.csv};
        \addplot[only marks,mark options={color=red}, mark size = 0.4pt] table  {pts_sample.csv};
        \addplot[only marks,mark options={color=red}, mark size = 0.4pt] table  {pts_sample_circle.csv};

		\end{axis}
\end{tikzpicture}
\caption{\label{graph:sample_circle}Systematic-binomial sampling procedure with fixed size $n=10$ and $r=5$. In gray, the units sampled at the first phase, and in red the units in the final selection. On the top, we see the random shift $u$ plotted on a circle. On the bottom, we see the final sample on the interval $[0,1]$.}
\end{figure}
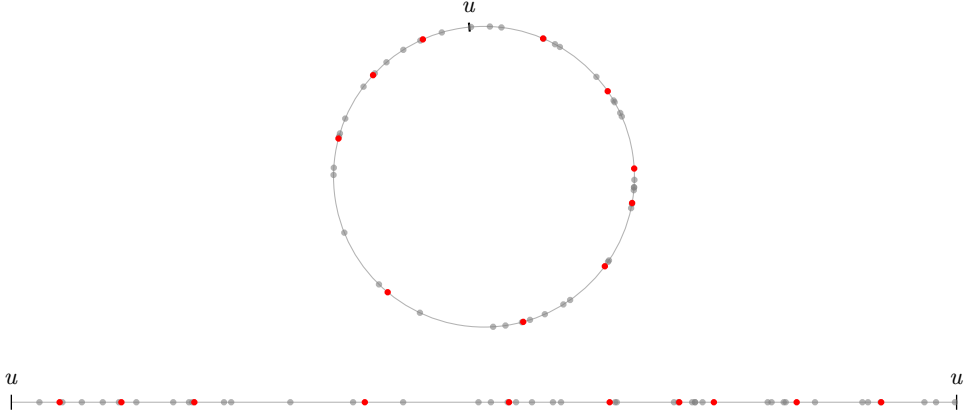
An implementation is proposed in Algorithm~\ref{alg:binom_qs}.
\begin{algorithm}[htb!]
\caption{Systematic-binomial sample with size $n$ and integer parameter $r$.}
\label{alg:binom_qs}
\begin{algorithmic}
   \REQUIRE  $n, r\in\mathbb{N}_{*}$.
\STATE Generate $\tilde{y}_{1},\dots,\tilde{y}_{nr}$ the sequence of order statistics of $n\cdot r$ i.i.d. variables $\mathpzc{U}(0,1)$.
    \FOR{$i=1,\dots,n,$} \STATE $\tilde{x}_{i} = \tilde{y}_{ir}$
    \ENDFOR
    \STATE Generate $u \sim \mathpzc{U}(0,1)$
    \FOR{ $i=1,\dots,n$} \STATE $x_{i}= (\tilde{x}_{i}+u) \mod 1$
    \ENDFOR
    \RETURN $\{x_{1},x_{2},\dots,x_{n}\}$ ordered systematic-binomial sample with parameter $r$ and size $n$.
\end{algorithmic}
\end{algorithm}

Another way to obtain a realization of a systematic-binomial process is to work with circular inter-arrival times. The first phase binomial sample is selected by generating $(\widetilde{J}^{c}_{i})_{i=1,\dots,nr}$, realization of a $\mathpzc{Dir}({\bf 1}_{nr})$ distribution, then these inter-arrival times are aggregated in packets of $r$ to form the circular inter-arrival times of the final sample,
\begin{equation}\label{dirn}
J_{i}^{c} = \sum_{k=1}^{r} \widetilde{J}^{c}_{(i-1)r+k},\ i=1,\dots, n,
\end{equation}
and finally a random uniform shift $u$ is used to set the origin. The selected units are
\begin{equation}
\label{rara}
\left(\sum_{j=1}^{i} J_{j}^{c} +u\right) \mod 1, \mbox{ for } i=1,\dots,\ n.
\end{equation}
However, the aggregation properties of the Dirichlet distribution ensure that the vector ${\bf J}^{c}=( J^{c}_{1},\dots, J^{c}_{n})$ of Equation~\ref{dirn} follows a $\mathpzc{Dir}(r{\bf 1}_{n})$ distribution. We also get that
\begin{equation}
\label{interarrisystbin1}
J^{c}_{i} \sim \mathpzc{Beta}( r,r(n-1) ),
\end{equation}
and
\begin{equation}
\label{interarrisystbin2}
\sum_{j=1}^{m} J^{c}_{i+j} \sim \mathpzc{Beta}(mr,mr(n-1)),\ 1\leq m\leq n-i-1,
\end{equation}
where $\mathpzc{Beta}(\cdot,\cdot)$ denotes the beta distribution. Taking advantage of this consideration, we can use Algorithm~\ref{alg:binom_qs2} to select samples from a systematic-binomial process. This method is not restricted to integer values of $r$. 
\begin{algorithm}[htb!]
\caption{Generate a systematic-binomial sample with size $n$ and real parameter $r>0$.}
\label{alg:binom_qs2}
\begin{algorithmic}
   \REQUIRE  $n \in\mathbb{N}_{*}, r \in\mathbb{R}_{+}^{*}$.
\STATE Generate ${\bf J}^{c} =( J^{c}_{1},\dots, J^{c}_{n} )\sim \mathpzc{Dir}(r{\bf 1}_{n}).$
    \STATE Generate $u \sim \mathpzc{U}(0,1)$
    \FOR{ $i=1,\dots,n$} \STATE $x_{i}= \left(\sum_{j=1}^{i} J_{j}^{c} +u\right) \mod 1$
    \ENDFOR
    \RETURN $\{x_{1},x_{2},\dots,x_{n}\}$ ordered systematic-binomial sample with parameter $r$ and size $n$.
\end{algorithmic}
\end{algorithm}

Inclusion densities of the systematic-binomial process are given in Proposition~\ref{densbin}.
\begin{proposition}\label{densbin}
Consider a systematic-binomial process of size $n$ with parameter $r$. Its inclusion densities are given below.
\begin{enumerate}
\item The first-order inclusion density is given by:
$$
\pi(x) = n,\mbox{ for } x\in (0,1). 
$$
\item The second-order inclusion density is given by
\begin{equation}
\pi^{(2)}(x,y)= n \sum_{m=1}^{n-1} \frac{\Gamma(nr)}{\Gamma(mr)\Gamma[(n-m)r]} |x-y|^{mr-1}(1-|x-y|)^{(n-m)r-1}, \label{pik2systbin}
\end{equation}
for $x \neq y \in (0,1)$.
\item The $n-$th order inclusion density is given by:
$$
\pi^{(n)}(x_{1},\dots,x_{n})=n\frac{\Gamma(nr)}{[\Gamma(r)]^{n}} (1+x_{1}-x_{n})^{r-1}(x_{2}-x_{1})^{r-1} \cdots \; (x_{n}-x_{n-1})^{r-1},
$$ 
for $x_{1}<\dots<x_{n} \in (0,1)$.
\end{enumerate}
\end{proposition}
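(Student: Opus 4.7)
The plan rests on the circular representation of the sample given in~(\ref{rara}), namely $\{(\sum_{j=1}^{i}J^{c}_{j}+u)\bmod 1:i=1,\dots,n\}$ with $\Jb^{c}\sim \mathpzc{Dir}(r\mathbf{1}_{n})$ independent of $u\sim \mathpzc{U}(0,1)$. All three parts will follow from three standard facts about $\mathpzc{Dir}(r\mathbf{1}_{n})$: its total mass is $1$; the aggregation property $\sum_{j\in S}J^{c}_{j}\sim \mathpzc{Beta}(|S|r,(n-|S|)r)$ for any nonempty proper subset $S$ of $\{1,\dots,n\}$; and the full symmetry of its density $\tfrac{\Gamma(nr)}{[\Gamma(r)]^{n}}\prod_{i}j_{i}^{r-1}$ under permutations of its arguments. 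For Part~1, invariance under rotation of $(0,1)$ viewed as a circle follows from $u$ being uniform and independent of $\Jb^{c}$; this forces $\pi$ to be constant, and $\int_{0}^{1}\pi(x)\,dx=\E[N(X,(0,1))]=n$ then yields $\pi\equiv n$.

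For Part~2, I would write
\begin{equation*}
\pi^{(2)}(x,y)=\sum_{i\neq j}f_{(x_{i},x_{j})}(x,y),
\end{equation*}
as a sum over the $n(n-1)$ ordered pairs of sample indices. For each $m\in\{1,\dots,n-1\}$, exactly $n$ pairs satisfy $(j-i)\equiv m\pmod{n}$, and for such a pair the shift $u$ makes $x_{i}$ uniform on $(0,1)$ while the circular gap $(x_{j}-x_{i})\bmod 1$ equals $\sum_{k=1}^{m}J^{c}_{i+k}\sim \mathpzc{Beta}(mr,(n-m)r)$ by Dirichlet aggregation. Collecting these contributions gives
\begin{equation*}
\pi^{(2)}(x,y)=n\sum_{m=1}^{n-1}f_{\mathpzc{Beta}(mr,(n-m)r)}\bigl((y-x)\bmod 1\bigr).
\end{equation*}
A reindexing $m\mapsto n-m$ shows that the summand is invariant under $d\mapsto 1-d$ with $d=(y-x)\bmod 1$, so this gap may be replaced by $|x-y|$ and one recovers~(\ref{pik2systbin}).

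For Part~3, I would mimic the proof of Proposition~\ref{prop1}. For $x_{1}<\cdots<x_{n}\in(0,1)$, define the gaps $g_{i}=x_{i+1}-x_{i}$ for $i=1,\dots,n-1$ and $g_{n}=1+x_{1}-x_{n}$. The event that the sorted sample equals $(x_{1},\dots,x_{n})$ decomposes into $n$ almost surely disjoint events indexed by which of the $x_{k}$'s plays the role of the shifted starting point; on the $k$-th event, $\Jb^{c}$ equals the corresponding cyclic shift of $(g_{1},\dots,g_{n})$. Since the Dirichlet density is symmetric under permutations of its arguments, all $n$ cyclic shifts yield the same value $\frac{\Gamma(nr)}{[\Gamma(r)]^{n}}\prod_{i}g_{i}^{r-1}$, and summing with $f_{u}\equiv 1$ on $(0,1)$ gives the claimed formula. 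The main obstacle is the bookkeeping in Part~2: one must correctly account for both arcs of the circle joining $x_{i}$ and $x_{j}$ (so that pairs with $(j-i)\equiv m$ and pairs with $(j-i)\equiv n-m$ both contribute) and then exploit the $m\leftrightarrow n-m$ symmetry to obtain a clean expression in $|x-y|$; Parts~1 and~3 are essentially immediate once the circular Dirichlet representation is in hand.
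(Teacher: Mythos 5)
Your proposal is correct and follows essentially the same route as the paper: rotation invariance plus total mass $n$ for Part~1, Dirichlet aggregation giving a sum of $\mathpzc{Beta}(mr,(n-m)r)$ densities for Part~2, and the Proposition~\ref{prop1}-style decomposition over which sample point carries the shift $u$ for Part~3. The only cosmetic difference is that the paper derives Part~2 from stationarity and the counting function $U(h)=\sum_{m=1}^{n-1}F^{m*}(h)$ of the gap process, whereas you sum directly over ordered index pairs; the resulting computation is identical.
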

\begin{proof}\leavevmode
\begin{enumerate}
\item Due to the random uniform shift used to set the origin, the point process canonically induced on the unit circle is clearly stationary (i.e. rotation invariant). Its first moment measure is thus a Haar measure and proportional to the Lebesgue measure. It follows that the first moment measure of the considered systematic-binomial process is proportional to the Lebesgue measure on $(0,1)$, and the proportionality coefficient is the total mass $n$.
\item The point process being stationary, its second-order inclusion density reduces to
$$
\pi^{(2)}(x,y)=n\cdot  u(y-x),\mbox{ if for example } 0\leq x<y<1,
$$
where $u$ is the first-order density of the point process $J^{c}_{2},\dots, J^{c}_{n}$ on $[0,1]$. However we have that the corresponding counting function $U(h)=N(0,h)$ is given by:
$$
U(h)=\sum_{m=1}^{n-1}F^{m*}(h),
$$
where $F^{m*}$ is the CDF of $\sum_{i=1}^{m} J^{c}_{2+i-1}$ and is thus the CDF of a $\mathpzc{Beta}(mr,mr(n-1))$ distribution. Hence 
$$
u(h)=\sum_{m=1}^{n-1} \frac{\Gamma(nr)}{\Gamma(mr)\Gamma[(n-m)r]} h^{mr-1}(1-h)^{(n-m)r-1},\mbox{ }0<h<1,
$$
and the result follows.
\item As with ordinary binomial sampling, a given sample is obtained exactly when $u$ is equal to one of the units and the inter-arrival times agree with the sample. Moreover the Dirichlet distribution with parameter $r{\bf 1}_{n}$ is symmetric and $u$ is independent from $\Jb^c$. We get that:
\begin{eqnarray*}
\lefteqn{ f_{\Xb}(x_{1},\dots,x_{n}) }\\
&=& f_u(x_{1})f_{\Jb^c}(x_2-x_{1},\dots ,x_{n}-x_{n-1}, x_{1}-x_{n}+1) \\
&+& f_u(x_2)f_{\Jb^c}(x_3-x_2,\dots, x_{1}-x_{n}+1, x_2-x_{1})\\
&\vdots&\\
&+& f_u(x_{n})f_{\Jb^c}(x_{1}-x_{n}+1,\dots, x_{n-1}-x_{n-2}+1, x_{n}-x_{n-1})\\
&=& n\frac{\Gamma(nr)}{\left[\Gamma(r)\right]^{n}} (1+x_{1}-x_{n})^{r-1}(x_2-x_{1})^{r-1} \cdots \;(x_{n}-x_{n-1})^{r-1} .
\end{eqnarray*}
\end{enumerate}
\end{proof}
Some straightforward computations lead to Equation~\ref{intpi2}
\begin{equation}\label{intpi2}
\int_{0}^{1}\int_{0}^{1} \pi^{(2)}(x,y) dx dy = n(n-1). 
\end{equation}
A plot of $\pi^{(2)}(x,y)$ as a function of $y$ is given in Figure~\ref{G2}, for $x=0.4$ and different values of $r$. Except for $r=1$, $\pi^{(2)}(x,y)=0$ if $x=y$. The larger $r$ is, the flatter the joint inclusion density is around $x=y$. The selection of neighboring units is thus very unlikely with such a sampling design and a large $r$. When $r$ is very large the function concentrates on regularly spaces pikes as in the systematic-Poisson case.
\begin{figure}[htb!]
\begin{center}
\includegraphics[scale=0.6]{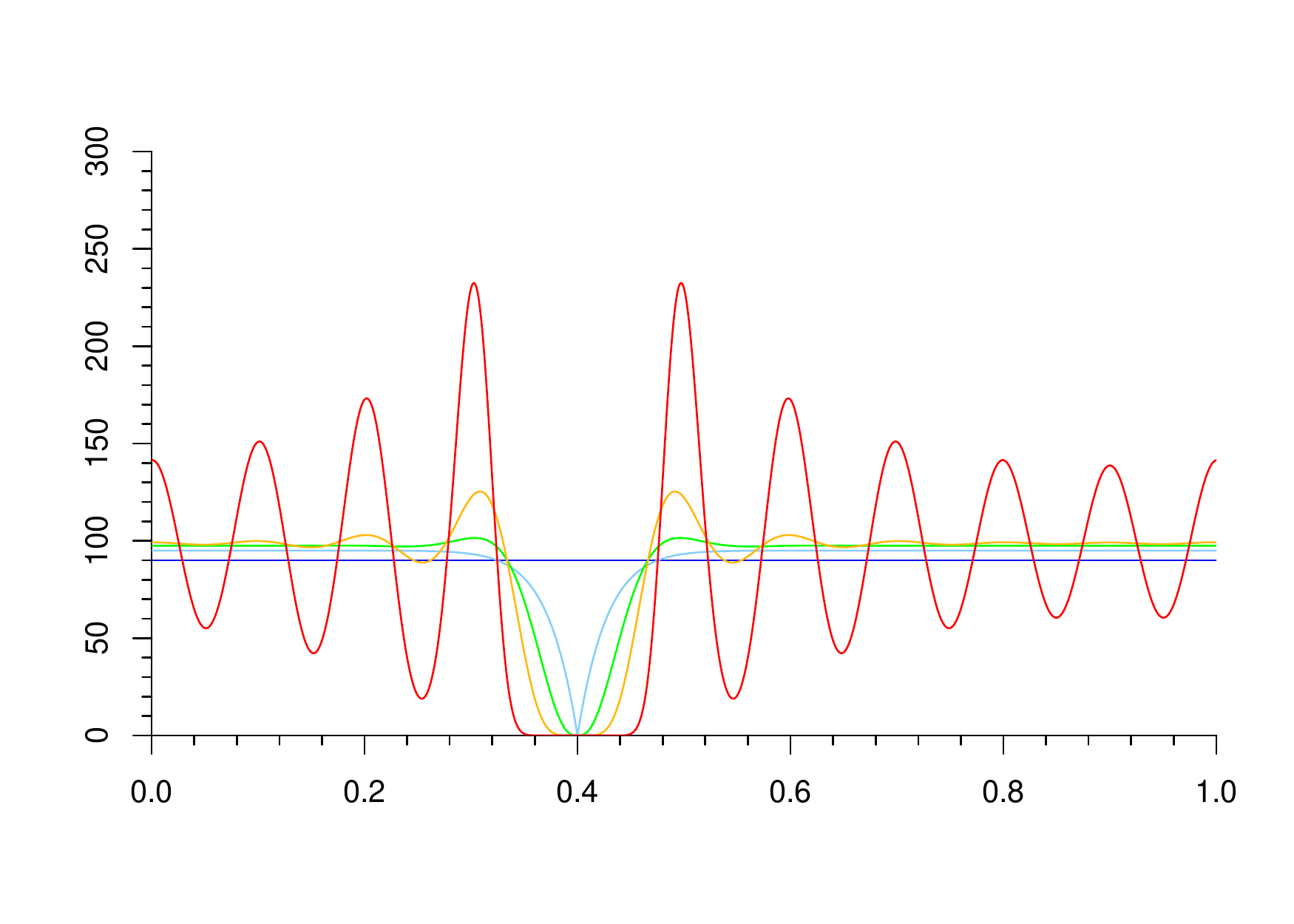}
\caption{\label{G2}Joint inclusion density $\pi^{(2)}(x,y)$ as a function of $y$ for $x=0.4$ in systematic-binomial sampling with $n=10$ and $r=1,2,4,8,30$. The range of oscillations increases with $r$.  When $r=1$, $\pi^{(2)}(x,y)$ is constant.}
\end{center}
\end{figure}

\section{Asymptotic results}\label{S6}
The sampling processes introduced in Section~\ref{S5} depend on a parameter $r$. When $r$ gets large, they look more and more like systematic sampling processes. Indeed, we will see that these processes converge in distribution to the systematic sampling process when $n$ is fixed and $r$ goes to infinity. We first need Lemma~\ref{convergenceb}.
\begin{lemma}
\label{convergenceb}
A forward gamma random variable $\mathpzc{ForG}(r,rn)$ converges in distribution to a continuous uniform variable $\mathpzc{U}(0,1/n)$ when $r$ tends to infinity and $n$ is fixed.
\end{lemma}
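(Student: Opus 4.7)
The plan is to prove convergence in distribution by showing pointwise convergence of the cumulative distribution functions to the CDF of $\mathpzc{U}(0,1/n)$, which is continuous everywhere. From~(\ref{F0}), with $\mu = r/(rn) = 1/n$ the mean of a $\mathpzc{Gamma}(r,rn)$ variable, the CDF of $\mathpzc{ForG}(r,rn)$ is
$$
F_{0,r}(x) = n \int_0^x \bigl[1-F_r(t)\bigr]\,dt,
$$
where $F_r$ denotes the CDF of $\mathpzc{Gamma}(r,rn)$. The target CDF is $x\mapsto n\min(x,1/n)$ on $[0,\infty)$, which corresponds to $\mathpzc{U}(0,1/n)$.

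First I would show that $\mathpzc{Gamma}(r,rn)$ concentrates at the constant $1/n$ as $r\to\infty$. Since the mean is $1/n$ and the variance is $r/(rn)^2 = 1/(rn^2)$, Chebyshev's inequality gives, for any $\varepsilon>0$,
$$
P\bigl(|J_r - 1/n|>\varepsilon\bigr) \leq \frac{1}{rn^2\varepsilon^2} \xrightarrow[r\to\infty]{} 0.
$$
Equivalently, $F_r(t) \to \mathbb{1}_{\{t\geq 1/n\}}$ for every $t \neq 1/n$, so $1-F_r(t) \to \mathbb{1}_{\{t<1/n\}}$ pointwise (except at one point).

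Next I apply the dominated convergence theorem to $F_{0,r}(x) = n\int_0^x[1-F_r(t)]\,dt$. The integrands are bounded by the constant $1$, which is integrable on the bounded interval $[0,x]$, so
$$
F_{0,r}(x) \xrightarrow[r\to\infty]{} n\int_0^x \mathbb{1}_{\{t<1/n\}}\,dt = n\min(x,1/n).
$$
This is exactly the CDF of $\mathpzc{U}(0,1/n)$, which is continuous on $\mathbb{R}$, so pointwise convergence of CDFs at every point implies convergence in distribution. The proof is essentially a clean concentration-plus-DCT argument; the only subtle step is acknowledging that the limit of $F_r$ is discontinuous at $t=1/n$, but this single point has Lebesgue measure zero and does not affect the integral, which is why DCT delivers the conclusion without issue.
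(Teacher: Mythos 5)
Your proof is correct, but it takes a genuinely different route from the paper's. The paper works with characteristic functions: it derives the general identity $\phi_{f_{0}}(t)=\frac{1}{i\mu}\left[\frac{\phi_{f}(t)-1}{t}\right]$ relating the characteristic function of the forward recurrence distribution to that of the inter-arrival distribution, specializes it to the Gamma case to get $\phi(t;r,rn)$ in closed form, computes the limit $\frac{e^{it/n}-1}{it/n}$ as $r\to\infty$, recognizes it as the characteristic function of $\mathpzc{U}(0,1/n)$, and invokes L\'{e}vy's continuity theorem. You instead work directly with the CDF $F_{0,r}(x)=n\int_{0}^{x}[1-F_{r}(t)]\,dt$ from~(\ref{F0}), establish concentration of $\mathpzc{Gamma}(r,rn)$ at $1/n$ via Chebyshev, and pass to the limit under the integral by dominated convergence, correctly noting that the single discontinuity point $t=1/n$ of the limiting indicator is Lebesgue-null and that the limiting CDF $n\min(x,1/n)$ is continuous everywhere, so pointwise CDF convergence suffices for convergence in distribution. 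Your argument is more elementary (no complex analysis, no continuity theorem) and makes the concentration mechanism transparent; the paper's argument has the advantage of producing the explicit characteristic function of $\mathpzc{ForG}(r,\lambda)$ along the way, which is a reusable byproduct, and of generalizing immediately to any inter-arrival family whose characteristic function has a tractable limit. Both are complete and valid.
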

\begin{proof}
It is easy to prove that, if $\phi_{f}$ is the characteristic function of a positive probability distribution with expectation $\mu>0$, PDF $f$ and CDF $F$, then the characteristic function $\phi_{f_{0}}$ of the probability distribution with density $f_{0}=(1-F)/\mu$ is such that: 
$$
\phi_{f_{0}}(t)=\frac{1}{i\mu}\left[\frac{\phi_{f}(t)-1}{t}\right],\ t\in \R,
$$
where $i^{2}=-1$. However, the characteristic function of a $\mathpzc{Gamma}(r,\lambda)$ is given by $\phi_{\Gamma}(t)=(1-it/\lambda)^{-r}$. It follows that the characteristic function of a $\mathpzc{ForG}(r,\lambda)$ is given by
$$
\phi(t;r,\lambda)=\lambda \frac{ \left(\frac{\lambda}{\lambda - i t}\right)^{r}-1}{i r t},
$$
Replacing $\lambda$ by $rn$ and letting $r$ tend to infinity, we obtain that the characteristic function has a limit:
$$
\lim_{r\rightarrow\infty} \phi(t;r,rn)= \frac{e^{it/n}-1}{it/n},
$$
which is the characteristic function of a continuous uniform random variable $\mathpzc{U}(0,1/n) $. L\'{e}vy's continuity theorem applies and gives the result.
\end{proof}
We can now prove the announced result. We start with the systematic-Poisson process in Proposition~\ref{prop::convergence_syst_poiss}.
\begin{proposition}
\label{prop::convergence_syst_poiss}
Let us consider a systematic-Poisson process on $(0,1)$ with parameters $r >0 $ and $\lambda=rn$. Then, the process weakly converges to a systematic process of size $n$ when $r$ tends to infinity.
\end{proposition}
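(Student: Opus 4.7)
The plan is to decompose the systematic-Poisson process into its starting point and its subsequent inter-arrival increments, prove convergence of each piece, and then assemble. By construction the process on $(0,1)$ consists of the partial sums
\[
S_{k}(r) = x_{1} + \sum_{j=2}^{k} J_{j},\qquad k\ge 1,
\]
that fall in $(0,1)$, where $x_{1}\sim\mathpzc{ForG}(r,rn)$ and the $J_{j}$'s are i.i.d.\ $\mathpzc{Gamma}(r,rn)$ independent of $x_{1}$. Lemma~\ref{convergenceb} gives $x_{1}\li{r\to\infty}U\sim\mathpzc{U}(0,1/n)$. A $\mathpzc{Gamma}(r,rn)$ random variable has mean $1/n$ and variance $1/(rn^{2})\to 0$, so Chebyshev's inequality yields $J_{j}\to 1/n$ in probability. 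By independence and Slutsky's theorem, one obtains the joint finite-dimensional convergence
\[
\bigl(S_{1}(r),\dots,S_{n+1}(r)\bigr)\li{r\to\infty}\bigl(U,\ U+\tfrac{1}{n},\ \dots,\ U+\tfrac{n-1}{n},\ U+1\bigr).
\]

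This identifies the candidate limit as the deterministic systematic grid $\{U+k/n:k=0,\dots,n-1\}$. Next I would control the random sample size $N_{r}$, defined as the number of partial sums lying in $(0,1)$. Because $U\in(0,1/n)$ almost surely, the limit of $S_{n}(r)$ lies in $(0,1)$ and the limit of $S_{n+1}(r)$ lies in $(1,1+1/n)$, each with probability one; combining these two observations with the joint convergence above yields $P(N_{r}=n)\to 1$ as $r\to\infty$.

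To finish, I would upgrade the joint convergence of the ordered points, together with $P(N_{r}=n)\to 1$, to weak convergence on $\mathcal{S}$. The standard criterion \citep[see][Theorem~11.1.VII]{Daley02} is that the joint distribution of $(N_{r}(A_{1}),\dots,N_{r}(A_{m}))$ converges for every finite family of relatively compact Borel sets whose boundaries are almost surely not charged by the limit process. The limit systematic process has atoms only on the Lebesgue-null random set $\{U+k/n\}$, so any finite union of intervals with deterministic endpoints distinct from $k/n$ is automatically a continuity set, and the required convergence of counting vectors reduces to the joint finite-dimensional convergence established above applied to the first $n$ partial sums on the event $\{N_{r}=n\}$ of probability tending to one.

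The main obstacle is the bookkeeping around the random sample size: one must simultaneously rule out ``spurious'' extra points appearing near $1$ and ``missing'' points produced by unusually large inter-arrivals. Both failure modes are controlled by the single bound $\Var(J_{j})=1/(rn^{2})\to 0$, but writing the argument cleanly so that it delivers genuine weak convergence on $\mathcal{S}$, as opposed to mere convergence of the finite-dimensional distributions of the ordered points, requires some care in choosing the right criterion for the target topology on the space of finite point configurations.
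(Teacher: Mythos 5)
Your proof is correct and follows essentially the same route as the paper's: convergence of the first inter-arrival to $\mathpzc{U}(0,1/n)$ via Lemma~\ref{convergenceb}, degeneration of the i.i.d.\ $\mathpzc{Gamma}(r,rn)$ inter-arrivals to the point mass at $1/n$ through their vanishing variance, joint convergence by independence, and the equivalence of finite-dimensional and weak convergence for point processes via Theorem~11.1.VII (which is in Volume~II, \cite{Daley08}, rather than Volume~I as you cite it). The explicit bookkeeping you add on the random sample size $N_r$ is a detail the paper leaves implicit but does not change the argument.
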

\begin{proof}
In systematic-Poisson process with parameter $r$ and $\lambda=rn$, the first inter-arrival time follows a forward Gamma distribution $\mathpzc{ForG}(r,rn)$ and the next ones follow a Gamma distribution $\mathpzc{Gamma}(r,rn)$. We have seen in Proposition~\ref{convergenceb} that $\mathpzc{ForG}(r,rn)$ converges to a $\mathpzc{U}(0,1/n)$ when $r$ tends to infinity. We also have that the $\mathpzc{Gamma}(r,rn)$ distribution converges to a $\mathpzc{Dirac}(1/n)$. Indeed, the expectation of a $\mathpzc{Gamma}(r,rn)$ is equal to $1/n$ and its variance to $1/(rn^{2})$. As the inter-arrival times are independent, we get that any finite family of them jointly converges to the matching distributions of inter-arrival times of a systematic process, as defined in Section~\ref{S3}. However, in the case of point processes the weak convergence of finite distributions is equivalent to the weak convergence of the process \citep[see, e.g., Theorem 11.1.VII of][p. 137]{Daley08}.
\end{proof}
The case of the systematic-binomial process is dealt with in Proposition~\ref{propbin}.
\begin{proposition}\label{propbin}
Consider a systematic-binomial process of size $n$ on $(0,1)$ and with parameter $r>0$. Then the process converges in distribution to a systematic sampling process when $r$ tends to infinity.
\end{proposition}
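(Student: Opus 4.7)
The plan is to parallel the argument of Proposition~\ref{prop::convergence_syst_poiss}: show that the vector of circular inter-arrival times generated in Algorithm~\ref{alg:binom_qs2}, together with the independent uniform shift, jointly converges to the corresponding configuration of a systematic process, then conclude weak convergence of the point processes via Theorem~11.1.VII of \cite{Daley08}, which equates weak convergence of a point process with weak convergence of all its finite-dimensional distributions.

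First, I would examine the asymptotic behaviour of $\Jb^{c}\sim \mathpzc{Dir}(r\1_{n})$ as $r\to\infty$. By~(\ref{interarrisystbin1}), each marginal satisfies $J_{i}^{c}\sim \mathpzc{Beta}(r,r(n-1))$, with mean $1/n$ and variance $\frac{n-1}{n^{2}(rn+1)}$, which tends to $0$. Chebyshev's inequality then yields $J_{i}^{c}\to 1/n$ in probability for each $i$; since the limit is deterministic, joint convergence in probability of $\Jb^{c}$ to $(1/n,\dots,1/n)$ follows. Combined with $u\sim \mathpzc{U}(0,1)$ independent of $\Jb^{c}$, this gives joint convergence in distribution $(u,\Jb^{c})\to (u,1/n,\dots,1/n)$.

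Next, I would apply the continuous mapping theorem to the map $(u,\Jb^{c})\mapsto (x_{1},\dots,x_{n})$ defined by $x_{i}=\left(\sum_{j=1}^{i}J_{j}^{c}+u\right)\mod 1$. This map is continuous off the set where some partial sum plus $u$ equals an integer, which has probability zero under the limit law because $u$ has a continuous distribution. Hence $(x_{1},\dots,x_{n})$ converges in distribution to $\left((i/n+u)\mod 1\right)_{i=1,\dots,n}$, and a short verification shows that the resulting unordered set coincides in law with $\{u'+i/n : i=0,\dots,n-1\}$ where $u'=u\mod(1/n)\sim \mathpzc{U}(0,1/n)$, i.e.\ a systematic sample of size $n$ with interval $c=1/n$. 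Because both the approximating and limiting processes have exactly $n$ points in $(0,1)$, the joint distribution of these $n$ points determines every finite-dimensional distribution, so Theorem~11.1.VII of \cite{Daley08} delivers weak convergence of the processes themselves. The principal technical point is to verify that the discontinuity set of the modulo-1 operation is negligible under the limit measure; everything else is routine.
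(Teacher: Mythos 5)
Your proposal is correct and follows essentially the same route as the paper: it reduces the claim to the concentration of the $\mathpzc{Beta}(r,r(n-1))$ circular inter-arrival times at $1/n$ (via the same variance computation) and concludes with Theorem~11.1.VII of \cite{Daley08}. You merely make explicit some steps the paper leaves implicit, namely the joint convergence, the continuous-mapping argument for the modulo-$1$ map, and the identification of the limit with the systematic process of Definition~\ref{def::sys_design}.
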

\begin{proof}
It is sufficient to show that the circular inter-arrival times converge in distribution to a $\mathpzc{Dirac}(1/n)$. Indeed, the random start is already accounted for in the procedure. However, the inter-arrival times follow a Beta distribution with mean $1/n$ and variance $r^{2}(n-1)/[(rn)^{2} (rn+1)]$, and indeed, their variance tends to $0$ when $r$ tends to infinity. As in the proof of  Proposition~\ref{prop::convergence_syst_poiss}, Theorem 11.1.VII in \cite{Daley08} allows to finish the proof.
\end{proof}

\section{Simulations}\label{S7}
Some simulations are useful to illustrate the properties of the systematic-binomial sampling process. We also ran simulations with the systematic-Poisson process and found that it behaves similarly but gives results that are less accurate than the systematic-binomial process with our test function. We considered the following test function:
$$
h(x) = 100 \sin\left( \frac{3x^{2}}{2x^{2} +1} \right) \exp\left\{-\left[\sin(4\pi x)^{2}\right]\right\},
$$
plotted in Figure~\ref{fig:est_binom} (left). We aim at estimating its mean using the Horvitz-Thompson estimator on a sample selected with a systematic-binomial process. A set of $10,000$ samples was generated using a systematic-binomial process with fixed size $n=30$ and  for each value of the parameter $r = 1,\ 2,\ 5,\ 10,\ 30,\ 50\mbox{ and } 100$.  Figure~\ref{fig:est_binom} (right) shows that the accuracy of the Horvitz-Thompson estimator increases with $r$. As expected, the systematic process performs better than any quasi-systematic process. 
\begin{figure}[htb!]
\centering
\includegraphics[height = 6cm]{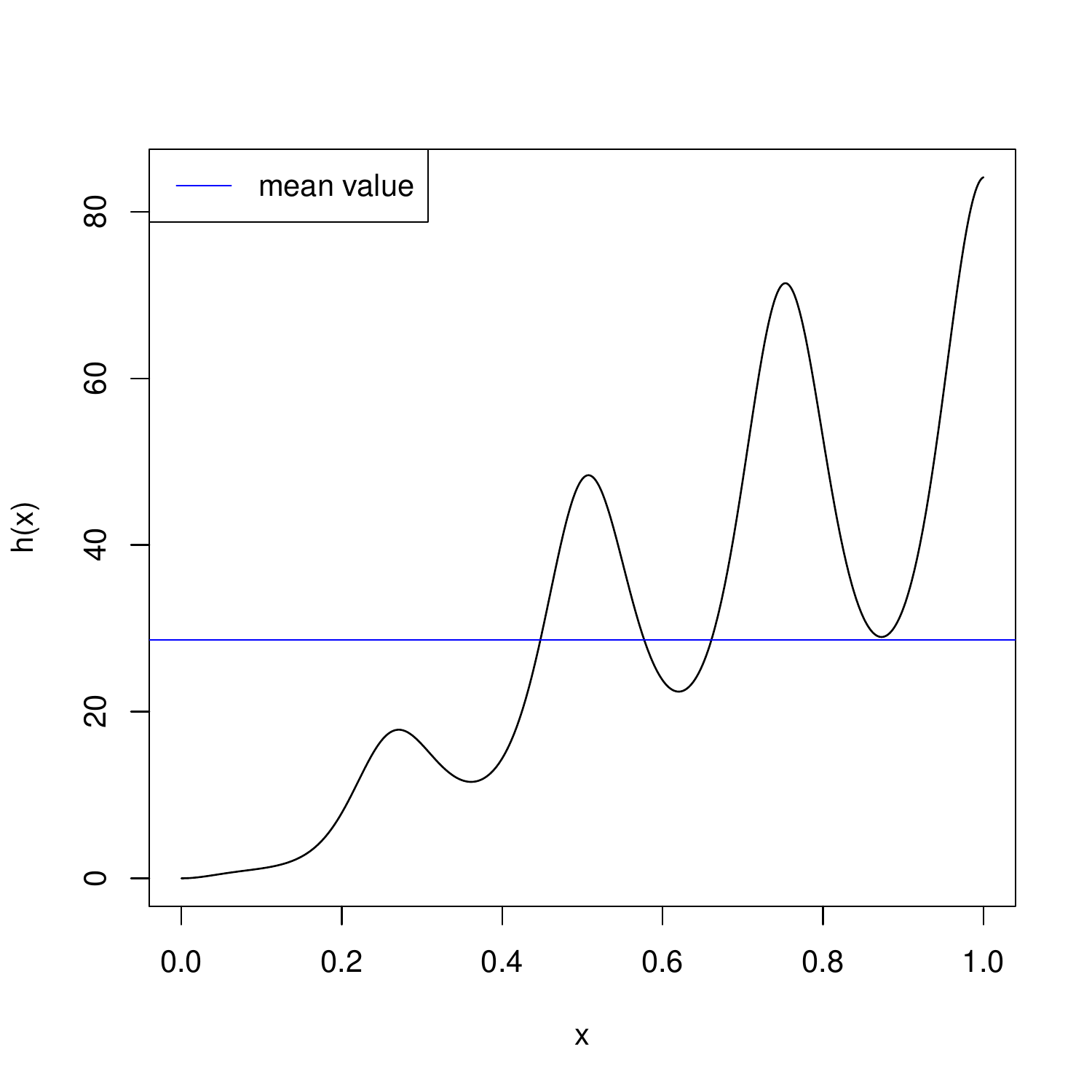}
\includegraphics[height = 6cm]{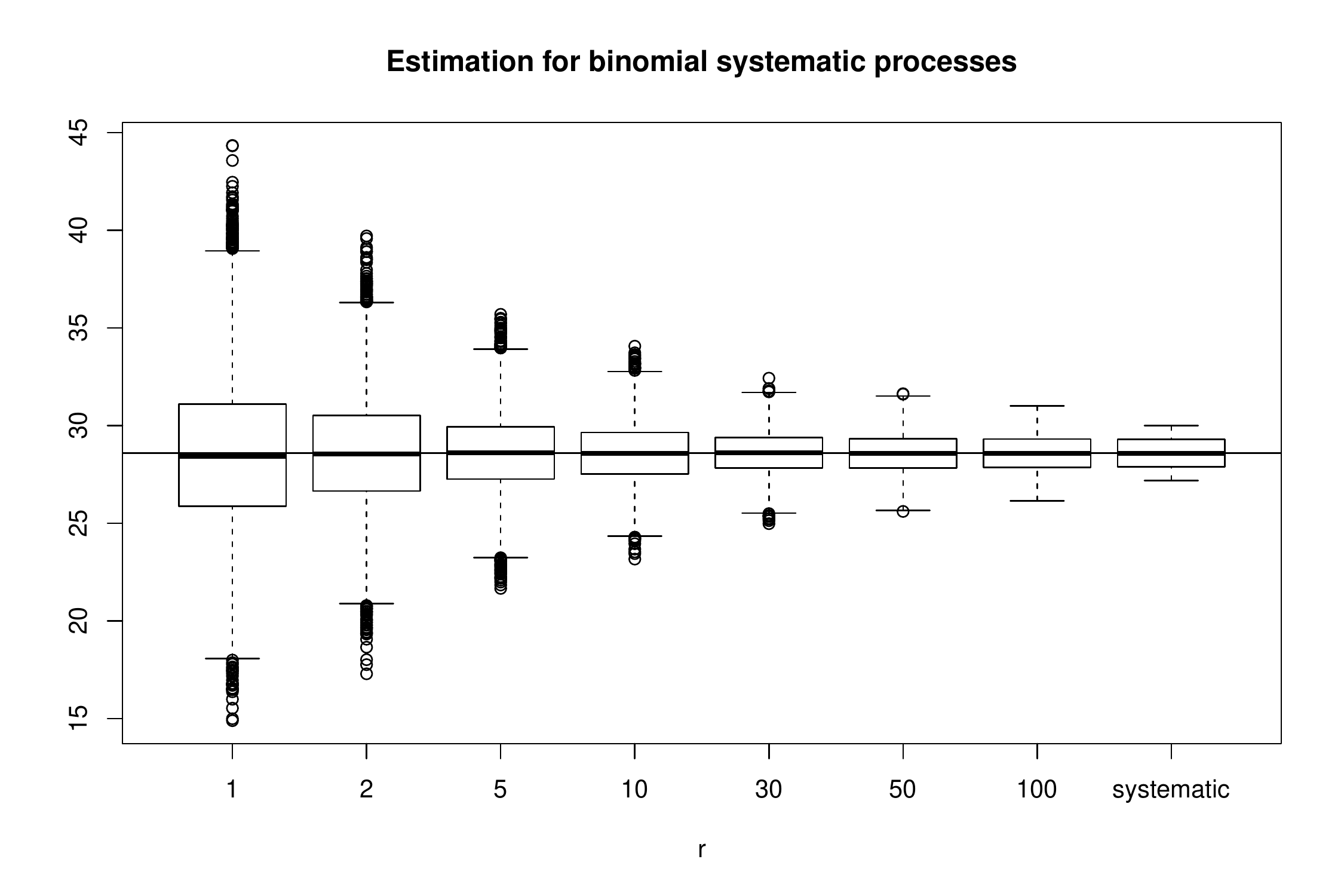}
\caption{\label{fig:est_binom}Test function (left) and boxplots of the estimated totals over all the  simulations (right). The parameter $r$ varies between $r=1$ and $r=100$ and we included the systematic sampling estimation. The horizontal line represent the true value of the total.}
\end{figure}
Corresponding simulation Root Mean Square Errors (RMSE) are given in Table~\ref{tab:RMSE}. We see in this table that the RMSE decreases rapidly with moderate values of $r$. 
\begin{table}[htb!]
\caption{RMSE of simulation results with a systematic-binomial process of size $n=30$ and different values of $r$.}\label{tab:RMSE}
\centering
\begin{tabular}{ccccccccc}
\hline
 & $r=1$ & $r=2$ & $r=4$ & $r=8$ & $r=30$ & $r=50$ & $r=100$ & Systematic \\
 \hline
 & 4.01 & 2.89 & 2.17 & 1.63 & 1.09 & 0.99 & 0.91 & 0.82 \\
 \hline
\end{tabular}
\end{table}

Estimating the variance of the Horvitz-Thompson estimator is a different issue. As previously stated, the variance estimator becomes unstable as $r$ increases, due to the fact that the second-order inclusion density tends to $0$ almost everywhere when $r$ goes to infinity. The estimated variance can also be negative in some cases. To alleviate these problems, the sample size $n$ should be increased when using large values of $r$. We give, in Table~\ref{tab:est.var} the mean over $10,000$ simulation samples of the variance estimator, their standard deviation as well as the true variance, for different combinations of $n$ and $r$. Since the systematic-binomial process has a fixed size, we use the Sen-Yates-Grundy variance estimator. The estimator is theoretically unbiased and decreases on average as the sample size $n$ increases when $r$ is fixed. We see that the standard deviation of the variance estimator values obtained in the simulations is consistently smaller for $r=2$ than for $r=1$ but gets a lot worse for larger values of $r$. Note that the simulation RMSEs of Table~\ref{tab:RMSE} mostly agree with the true variances in Table~\ref{tab:est.var}. 
\begin{table}[htb!] 
\centering
\caption{\label{tab:est.var} Estimated variance of a systematic-binomial process for different sample sizes $n$ and parameter $r$ values. In each cell, the simulation mean of the variance estimator with the corresponding Standard Deviations (SD) within parentheses, and the true target variance on the right.}%
\begin{tabular}{ccccccccc}
\hline
     & \multicolumn{2}{c}{$n=30$} & \multicolumn{2}{c}{$n=50$} & \multicolumn{2}{c}{$n=70$} & \multicolumn{2}{c}{$n=100$} \\
\hline
     & avg.$\widehat{\Var}$ (SD)  & $\Var$  & avg.$\widehat{\Var}$ (SD)  & $\Var$  & avg.$\widehat{\Var}$ (SD)  & $\Var$  & avg.$\widehat{\Var}$ (SD)  & $\Var$ \\
\hline
    $r=1$     & 15.90(3.60)   & 15.90  & 9.54(1.68)  & 9.53  & 6.82(1.00) &  6.81  & 4.76(0.59) & 4.76 \\
    $r=2$     &  8.53(2.41)   &  8.52  & 4.96(0.90)  & 4.97  & 3.51(0.57)  & 3.51  & 2.43(0.27) & 2.43 \\
    $r=4$     &  4.60(5.01)   &  4.66  & 2.62(2.15) &  2.62  & 1.82(0.89)  & 1.82  & 1.24(0.37) & 1.25 \\
    $r=8$     &  2.46(6.43)   &  2.68  & 1.43(6.25) &  1.44  & 0.95(1.32)  & 0.98 &  0.64(0.84) & 0.66 \\
    $r=30$    &  0.75(22.80)  &  1.21  & 0.30(7.72) &  0.56  & 0.15(3.45)  & 0.35 &  0.25(9.33) & 0.22 \\
\hline
\end{tabular}
\end{table}

We also see in table~\ref{tab:est.var} that the variance estimator gets very unstable for large values of $r$. One reason for this instability of the variance estimator is the joint inclusion density function getting close to 0 for large values of $r$ as can be seen on Figures~\ref{G1} and~\ref{G2}. Actually, this function, with $y$ in a neighborhood of a fixed $x$ in $(0,1)$, is driven by the first term in Equation~(\ref{pik2systbin}), and behaves like $|x-y|^{r-1}$. This is considered in Section~\ref{St} where we discuss the choice of the tuning parameter $r$. 

Another cause of instability in this example is that the test function has different values in $0$ and $1$ whereas the probability of jointly selecting $x>0$ but close to 0 and $y<1$ but close to 1 is small. When a sample is selected that contains such units, the variance estimator~(\ref{equ:varHHSYG}) takes a very large value. In our simulations, this case was responsible for most of the observed atypical very large values of the variance estimator. 

To solve this problem, if the test function $f$ is such that $f(0)\neq f(1)$, we define a new function $g$ by
$$
g(x)= \begin{cases} 
f(2x) &\mbox{if } 0\leq x \leq 1/2, \\ 
f(2x-2) & \mbox{if }   1/2< x\leq 1.
\end{cases} 
$$
The function $g$ is such that $\int g(x)\ dx = \int f(x)\ dx$ and satisfies $g(0) = g(1)$. As we see in Table \ref{tab:RMSE.sym}, replacing $f$ with $g$ does not increase the simulated RMSEs, nor the true variances found in Table~\ref{tab:est.var.sym}.
\begin{table}[htb!]
\caption{RMSE using the transformed function with a systematic-binomial process of size $n=30$ and different values of $r$.}\label{tab:RMSE.sym}
\centering
\begin{tabular}{ccccccccc}
\hline
 & $r=1$ & $r=2$ & $r=4$ & $r=8$ & $r=30$ & Systematic \\
 \hline
 &  4.00 &  2.94 &  2.09 & 1.47 & 0.76& 0.81 \\
 \hline
\end{tabular}
\end{table}

The variance estimator is much more stable with the transformed function $g$ than with the interest function $f$, as can be seen in Table~\ref{tab:est.var.sym}, compared with Table~\ref{tab:est.var}. The variance itself is slightly lower, meaning that the loss in spreading efficiency due to the transformation of the interest function is more than compensated by the absence of extreme values that were caused by $f(1)$ being different from $f(0)$.
\begin{table}[htb!] 
\centering
\caption{\label{tab:est.var.sym} Estimated variance of a systematic-binomial process using the transformed function for different sample sizes $n$ and parameter $r$ values. In each cell, the simulation mean of the variance estimator with the corresponding Standard Deviations (SD) within parentheses, and the true target variance on the right.}%
\begin{tabular}{ccccccccc}
\hline
     & \multicolumn{2}{c}{$n=30$} & \multicolumn{2}{c}{$n=50$} & \multicolumn{2}{c}{$n=70$} & \multicolumn{2}{c}{$n=100$} \\
\hline
     & avg.$\widehat{\Var}$ (SD)  & $\Var$  & avg.$\widehat{\Var}$ (SD)  & $\Var$  & avg.$\widehat{\Var}$ (SD)  & $\Var$  & avg.$\widehat{\Var}$ (SD)  & $\Var$ \\
\hline
    $r=2$     &  8.33(1.41)   &  8.31  & 4.86(0.60) &  4.85  & 3.43(0.36)  & 3.44 &  2.39(0.21) & 2.39 \\
    $r=4$     &  4.26(0.57)   &  4.26  & 2.44(0.23) &  2.44  & 1.72(0.13)  & 1.72 &  1.20(0.07) & 1.20 \\
    $r=8$     &  2.13(0.99)   &  2.15  & 1.23(0.25) &  1.22  & 0.86(0.09)  & 0.86 &  0.60(0.03) & 0.60 \\
    $r=30$    &  0.54(11.16)  &  0.58  & 0.28(2.69) &  0.33  & 0.20(1.19)  & 0.23 &  0.16(0.53) & 0.16 \\
\hline
\end{tabular}
\end{table}

When $r$ is not too large, confidence intervals exhibit coverage rates very close to the nominal rate of $95$\% as shown in Table~\ref{tab:cov_rate:sym}. These confidence intervals are computed assuming a normal approximation which seems compatible with our simulation results. However, for large values of $r$, $r\geq 30$ in our simulations, the estimation of the variance is very unstable, and the coverage rate of estimated confidence intervals deviates strongly. Indeed, for $r=30$ the low coverage rates in our simulations are explained by the variance estimator often taking negative values. In this case it would certainly be preferable to use a plain systematic process as the systematic-binomial process does not allow to get good confidence interval estimates. 
\begin{table}[htb!]
\centering
\caption{\label{tab:cov_rate:sym}Empirical coverage rates with a systematic-binomial sampling process and a transformed interest function, for different values of $n$ and $r$.}
\begin{tabular}{ccccc}
\hline
 & $n=30$ & $n=50$ & $n=70$ & $n=100$ \\
  \hline
$r=2$ & 0.9385 & 0.9473 & 0.9461 & 0.9479 \\
$r=4$ & 0.9422 & 0.9476 & 0.9469 & 0.9428 \\
$r=8$ & 0.9332 & 0.9474 & 0.9489 & 0.9513 \\
$r=30$& 0.4835 & 0.5102 & 0.5398 & 0.6019 \\
   \hline
\end{tabular}
\end{table}

\section{Choice of the tuning parameter}\label{St}

By choosing the tuning parameter $r$ one can make a compromise between an accurate estimation of the target parameter with a poor estimation of the precision and a less accurate estimation of the target parameter but with a reliable estimation of the estimator variance. Ideally one would have at its disposal a proxy interest function and could run simulations to select a suitable $r$, that is to say a $r$ that corresponds to one's preferred compromise. 

When no useful proxy function is available, some general remarks apply. Judging from our simulations, it seems that a small value of $r$ already helps reducing variance considerably compared to plain binomial process sampling. It is to be noted that, with values of $r$ between 1 and 2, the joint inclusion probability function $\pi^{(2)}(x,y)$ takes small values only when $x$ and $y$ are extremely close, as can be seen on Figures~\ref{G1} and~\ref{G2}. This is not the case anymore when $r$ is larger than $2$. In our simulations of Section~\ref{S7}, using the transformed function, we observed large values of the variance estimator only with $r$ larger than 2. 

A second point that could be inferred from our simulations is that larger sample sizes can accommodate for larger values of $r$. However, we do not have solid arguments to support that and we may just be lacking more simulation results here. It is to be noted though that, for fixed size processes such as the systematic-binomial process, one can check in advance which values of $r$ and sample size $n$, allow to satisfy the \cite{sen:53}, \cite{yat:gru:53} conditions: $\pi^{(2)}(x,y)\leq \pi(x)\pi(y)$ for all $x,y$. When these conditions hold, the variance estimator~(\ref{equ:varHHSYG}) is non-negative. Based on a numerical exploration, our conjecture is that this condition holds for $r=2$ and any sample size, but not for $r=3$. We also conjecture that, for fixed $r\geq 3$, increasing the sample size does not help reducing the maximal value of $\pi^{(2)}(x,y)/\pi(x)\pi(y)$. However, for a large enough $n$, and a given $x$, values of $y$ such that $\pi^{(2)}(x,y)/\pi(x)\pi(y)$ is greater than 1 are concentrated around $x$, and thus these couples do not contribute much to the variance estimator~(\ref{equ:varHHSYG}). Based on these considerations, it seems that $r=2$ could be a good compromise between stability of the variance estimator and stability of the target parameter estimator when no other information is available. The associated estimator true variance is however clearly greater than that obtained with larger values of $r$.

Finally, the regularity of the interest function has its importance. We can observe that having a function that satisfies a H\"{o}lder condition with exponent $\alpha\geq 0$ implies that the variance estimator~(\ref{equ:varHHSYG}) is bounded for all $r\leq 2\alpha+1$ (n.b.: we need to take the restriction of the function to $[0,1)$ and transport its source to the unit circle first in order to account for what happens near 0 and 1).

\section{Conclusion and discussion}\label{S8}
In this paper, we only worked on sampling processes with constant first-order inclusion density. It is however common in finite population survey sampling to choose different inclusion probabilities for different population units using auxiliary information available (e.g. the size of businesses or the approximate dispersion of the interest variable in a sub-population). Suppose we want to have a sampling process with first-order inclusion density proportional to a non-negative continuous function $\phi$, and note $\Phi(x)=\int_{0}^{x}\phi(t) dt$. Assume that the set of zeroes of $\phi$ have no interior, so that $\Phi$ is increasing. We just need to select a sample $x_{1},\dots,x_{n}$ with a constant inclusion density process, and retain $\Phi^{-1}(x_{1}),\dots,\Phi^{-1}(x_{n})$ as our sample. Indeed, if $\tilde{U}(x)=\E\{\tilde{N}([0,x])\}$ is the counting function of the new process and $U(x)=\E\{N([0,x])\}$ is the counting function of the constant density process, we have that 
$$
\tilde{U}(x)=U[\Phi(x)]=\lambda \Phi(x)\mbox{ for some }\lambda >0.
$$ 
It follows that $\tilde{U}(x)=\int_{0}^{x}\lambda \phi(t) dt$ and that the first-order inclusion density of the new process is given by $\tilde{\pi}(x)=\lambda \phi(x)$. The second inclusion density $\tilde{\pi}^{(2)}$ of this new process can also be derived from that, denoted by $\pi^{(2)}$, of the process used to select $x_{1},\dots,x_{n}$. We find that $\tilde{\pi}^{(2)}(x,y)=\pi^{(2)}[\Phi(x),\Phi(y)]\phi(x)\phi(y)$.

Both algorithms proposed in Section~\ref{S5} work with any positive value of $r$. The use of a parameter $0<r<1$ results in an attractive or clustering process where units tend to be selected in grouped clusters. This can be useful in some modelization problems. However, the interest of sampling with such clustering processes is probably limited to very specific objectives.

In future work, we intend to explore the possibility of developing similar sampling tools in spaces with more than one dimension. The generalization is far from being obvious as we only worked here on $\R$ equipped with its field ordering and some notions strongly depend on it.

Quasi-systematic sampling processes are useful to the practitioner who wants to make his own compromise between a more accurate estimation of a functions mean and a good estimation of the uncertainty of his estimator. Our simulations illustrate this trade-off between precision in the estimation of the mean and accuracy of the variance estimator. The former is better with a systematic sampling process while the latter is better with small values of $r$. We argue that quasi-systematic sampling processes could be used in place of plain binomial or Poisson processes for the purpose of estimating a mean in a continuous universe. A possible application is the estimation of the total or the mean of a variable of interest over time.

\section*{Acknowledgements}
The authors are grateful to one associate editor and three reviewers for their insightful comments that helped considerably improve the quality of this paper. This work was supported in part by the Swiss Federal Statistical Office. The views expressed in this paper are solely those of the
authors. M. W. was partially supported by a Doc.Mobility fellowship of the Swiss National Science Foundation.

\section*{References}

\end{document}